\documentclass[runningheads]{llncs}
\usepackage[utf8]{inputenc}
\usepackage{cite}
\usepackage{amsmath,amssymb,amsfonts}
\usepackage{algpseudocode}
\usepackage{graphicx}
\usepackage{pseudo}
\usepackage{textcomp}
\usepackage{xcolor}
\usepackage{enumitem}
\usepackage{theoremref}
\usepackage{todonotes}
\usepackage{comment}
\usepackage[colorlinks=true,linkcolor={blue},citecolor=blue]{hyperref}
\usepackage[noabbrev,capitalise,nameinlink]{cleveref}
\usepackage{microtype}

\newcommand{\WO}{\textsf{W[1]}}
\newcommand{\XL}{\textsf{XL}}
\newcommand{\PP}{\textsf{P}}
\newcommand{\PPSPACE}{\textsf{PSPACE}}
\newcommand{\NPP}{\textsf{NP}}
\newcommand{\FPT}{\textsf{FPT}}

\title{Token sliding on graphs of girth five\thanks{This work is supported by PHC Cedre project 2022 ``PLR''. A Preliminary version of the work appeared in the 48th International Workshop on Graph-Theoretic Concepts in Computer Science (WG 2022).}}
\author{Valentin Bartier\inst{1}\thanks{Supported by ANR project GrR (ANR-18-CE40-0032).} \and Nicolas Bousquet\inst{1}\thanks{Supported by ANR project GrR (ANR-18-CE40-0032).} \and Jihad Hanna\inst{2} \and Amer~E.~Mouawad\inst{2,3}\thanks{Research supported by
the Alexander von Humboldt Foundation and partially supported by URB project ``A theory of change through the lens of reconfiguration''.} \and Sebastian Siebertz\inst{3}}
\authorrunning{N.~Bousquet, J.~Hanna, A.E.~Mouawad, and S.~Siebertz}

\institute{CNRS, LIRIS, Universit\'e de Lyon, Universit\'e Claude Bernard Lyon 1, France \and American University of Beirut, Lebanon \and University of Bremen, Germany}

%\keywords{token sliding, independent set, girth, combinatorial reconfiguration, parameterized complexity}

\begin{document}
\maketitle

\begin{abstract}
In the {\sc Token Sliding} problem we are given a graph $G$ and two independent sets $I_s$ and $I_t$ in $G$ of size $k \geq 1$. The goal is to decide whether there exists a sequence $\langle I_1, I_2, \ldots, 
I_\ell \rangle$ of independent sets such that for all $i \in \{1,\ldots, \ell\}$ the set $I_i$ is an independent set of size~$k$, $I_1 = I_s$, $I_\ell = I_t$ and $I_i \triangle I_{i + 1} = \{u, v\} \in E(G)$. Intuitively, we view each independent set as a collection of tokens placed on the vertices of the graph. Then, the problem asks  whether there exists a sequence of independent sets that transforms $I_s$ into $I_t$ where at each step we are allowed to slide one token from a vertex to a neighboring vertex. In this paper, we focus on the parameterized complexity of {\sc Token Sliding} parameterized by $k$. As shown by Bartier et al.~\cite{DBLP:journals/algorithmica/BartierBDLM21}, the problem is \WO-hard on graphs of girth four or less, and the authors posed the question of whether there exists a constant $p \geq 5$ such that the problem becomes fixed-parameter tractable on graphs of girth at least $p$. We answer their question positively and prove that the problem is indeed fixed-parameter tractable on graphs of girth five or more, which establishes a full classification of the tractability of {\sc Token Sliding} parameterized by the number of tokens based on the girth of the input graph. 
\end{abstract}
\section{Introduction}
Many algorithmic questions present themselves in the following form: Given the description of a system state and the description of a state we would prefer the system to be in, is it possible to transform the system from its current state into the more desired one without ``breaking'' certain properties of the system in the process? Such questions, with some generalizations and specializations, have
received a substantial amount of attention under the so-called \emph{combinatorial reconfiguration framework}~\cite{DBLP:journals/tcs/BrewsterMMN16,DBLP:books/cu/p/Heuvel13,DBLP:journals/siamdm/Wrochna20}.

Historically, the study of reconfiguration questions predates the field of computer science, as many classic one-player games can be formulated as reachability questions~\cite{DBLP:journals/JS79,DBLP:journals/icga/KendallPS08}, e.g., the $15$-puzzle and Rubik's cube. More recently, reconfiguration problems have emerged from computational problems in different areas such as graph theory~\cite{DBLP:journals/dm/CerecedaHJ08,DBLP:journals/tcs/ItoDHPSUU11,DBLP:journals/dam/ItoKD12}, constraint satisfaction~\cite{DBLP:journals/siamcomp/GopalanKMP09,DBLP:journals/siamdm/MouawadNPR17}, computational geometry~\cite{DBLP:journals/comgeo/LubiwP15}, and even quantum complexity theory~\cite{DBLP:journals/toct/GharibianS18}. We refer the reader to the surveys by van den Heuvel~\cite{DBLP:books/cu/p/Heuvel13} and Nishimura~\cite{DBLP:journals/algorithms/Nishimura18} for extensive background on combinatorial reconfiguration.

\paragraph*{Independent set reconfiguration.} In this work, we focus on the reconfiguration of independent sets. Given a simple undirected graph $G$, a set of vertices $S \subseteq V(G)$ is an \emph{independent set} if the vertices of this set are pairwise non-adjacent. Finding an independent set of size $k$, i.e., the {\sc Independent Set} problem, is known to be \NPP-hard, but also \WO-hard\footnote{Informally, this means that it is unlikely to be fixed-parameter tractable.} parameterized by solution size $k$ and not approximable within $O(n^{1-\epsilon})$, for any $\epsilon > 0$, unless $\PP = \NPP$ \cite{DBLP:journals/toc/Zuckerman07}. Moreover, {\sc Independent Set} remains \WO-hard on graphs excluding $C_4$ (the cycle on four vertices) as an induced subgraph \cite{DBLP:conf/iwpec/BonnetBCTW18}.

We view an independent set as a collection of tokens placed on the vertices of a graph such that no two tokens are placed on adjacent vertices. This gives rise to two natural adjacency relations between independent sets (or token configurations), also called \emph{reconfiguration steps}. These reconfiguration steps, in turn, give rise to two combinatorial reconfiguration problems.

In the {\sc Token Sliding} problem, introduced by Hearn and Demaine \cite{DBLP:journals/tcs/HearnD05}, two independent sets are adjacent if one can be obtained from the other by removing a token from a vertex $u$ and immediately placing it on another vertex~$v$ with the requirement that $\{u, v\}$ must be an edge of the graph. The token is then said to \emph{slide} from vertex $u$ to vertex $v$ along the edge $\{u, v\}$. Generally speaking, in the {\sc Token Sliding} problem, we are given a graph $G$ and two independent sets $I_s$ and $I_t$ of $G$. The goal is to decide whether there exists a sequence of slides (a \emph{reconfiguration sequence}) that transforms $I_s$ to $I_t$.
The problem has been extensively studied under the combinatorial reconfiguration framework~\cite{DBLP:conf/wg/BonamyB17,DBLP:conf/swat/BonsmaKW14,DBLP:conf/isaac/DemaineDFHIOOUY14,DBLP:conf/isaac/Fox-EpsteinHOU15,DBLP:conf/tamc/ItoKOSUY14,DBLP:journals/tcs/KaminskiMM12,DBLP:journals/jcss/LokshtanovMPRS18}. It is known that the problem is \PPSPACE-complete, even on restricted graph classes such as graphs of bounded bandwidth (and hence pathwidth)~\cite{DBLP:journals/jcss/Wrochna18}, planar graphs~\cite{DBLP:journals/tcs/HearnD05}, split graphs~\cite{DBLP:journals/mst/BelmonteKLMOS21}, and bipartite graphs~\cite{DBLP:journals/talg/LokshtanovM19}. However, {\sc Token Sliding} can be decided in polynomial time on trees~\cite{DBLP:conf/isaac/DemaineDFHIOOUY14}, interval graphs~\cite{DBLP:conf/wg/BonamyB17}, bipartite permutation and bipartite distance-hereditary graphs~\cite{DBLP:conf/isaac/Fox-EpsteinHOU15}, and line graphs~\cite{DBLP:journals/tcs/ItoDHPSUU11}. 

In the {\sc Token Jumping} problem, introduced by Kami\'{n}ski et al.~\cite{DBLP:journals/tcs/KaminskiMM12}, we drop the restriction that the token should move along an edge of $G$ and instead we allow it to move to any vertex of $G$ provided it does not break the independence of the set of tokens. That is, a single reconfiguration step consists of first removing a token on some vertex $u$ and then immediately adding it back on any other vertex~$v$, as long as no two tokens become adjacent. The token is said to \emph{jump} from vertex $u$ to vertex $v$.
{\sc Token Jumping} is also \PPSPACE-complete on graphs of bounded bandwidth~\cite{DBLP:journals/jcss/Wrochna18} and planar graphs~\cite{DBLP:journals/tcs/HearnD05}. Lokshtanov
and Mouawad~\cite{DBLP:journals/talg/LokshtanovM19} showed that, unlike {\sc Token Sliding}, which is \PPSPACE-complete on bipartite graphs, the {\sc Token Jumping} problem becomes \NPP-complete on bipartite graphs. 
On the positive side, it is ``easy'' to show that {\sc Token Jumping} can be decided in polynomial-time on trees (and even on split/chordal graphs) since we can simply jump tokens 
to leaves (resp. vertices that only appear in the bag of a leaf in the clique tree) to transform one independent set into another.

In this paper we focus on the parameterized complexity of the {\sc Token Sliding} problem on graphs where cycles with prescribed lengths are forbidden. 
Given an \NPP-hard problem, parameterized complexity permits to refine the notion of hardness; does the hardness come from the whole instance or from a small parameter? 
A problem $\Pi$ is \FPT\ (fixed-parameter tractable) parameterized by~$k$ if one can solve it in time $f(k) \cdot poly(n)$, for some computable function $f$. In other words, the combinatorial
explosion can be restricted to the parameter $k$. In the rest of the paper, our parameter $k$ will be the size of the independent set (i.e.\ the number of tokens). 
{\sc Token Sliding} is known to be \WO-hard parameterized by $k$ on general~\cite{DBLP:journals/jcss/LokshtanovMPRS18} and bipartite~\cite{DBLP:journals/algorithmica/BartierBDLM21} graphs. It remains \WO-hard on $\{C_4, \dots, C_p\}$-free graphs for any $p \in \mathbb{N}$~\cite{DBLP:journals/algorithmica/BartierBDLM21} and becomes \FPT\ parameterized by $k$ on bipartite $C_4$-free graphs. 
The {\sc Token Jumping} problem is \WO-Hard on general graphs~\cite{DBLP:conf/tamc/ItoKOSUY14} and is \FPT\ when parameterized by $k$ on graphs of girth five or more~\cite{DBLP:journals/algorithmica/BartierBDLM21}. For graphs of girth four, it was shown that {\sc Token Jumping} being \FPT\ would imply that {\it Gap-ETH}, an unproven computational hardness hypothesis, is false~\cite{DBLP:conf/iwpec/AgrawalAD21}.
Both {\sc Token Jumping} and {\sc Token Sliding} were recently shown to be \XL-complete~\cite{DBLP:conf/iwpec/BodlaenderGS21}. 

\paragraph*{Our result.} The complexity of the {\sc Token Jumping} problem parameterized by $k$ is settled with regard to the girth of the graph, i.e., the problem is unlikely to be \FPT\ for graphs of girth four or less and \FPT\ for graphs of girth five or more. For {\sc Token Sliding}, it was only known that the problem is \WO-hard for graphs of girth four or less and the authors in~\cite{DBLP:journals/algorithmica/BartierBDLM21} posed the question of whether there exists a constant $p$ such that the problem becomes fixed-parameter tractable on graphs of girth at least $p$. We answer their question positively and prove that the problem is indeed \FPT\ for graphs of girth five or more, which establishes a full classification of the tractability of {\sc Token Sliding} parameterized by the number of tokens based on the girth of the input graph. 

\paragraph*{Our methods.}
Our result extends and builds on the recent \emph{galactic reconfiguration} framework introduced by Bartier et al.~\cite{DBLP:journals/corr/abs-2204-05549} to show that {\sc Token Sliding} is \FPT\ on graphs of bounded degree, chordal graphs of bounded clique number, and planar graphs. Let us briefly describe the intuition behind the framework and how we adapt it for our use case. One of the main reasons why the {\sc Token Sliding} problem is believed to be ``harder'' than the {\sc Token Jumping} problem is due to what the authors in~\cite{DBLP:journals/corr/abs-2204-05549} call the \emph{bottleneck effect}. Indeed, if we consider {\sc Token Sliding} on trees, there might be a lot of empty leaves/subtrees in the tree but there might be a bottleneck in the graph that prevents any other tokens from reaching these vertices. For instance, if we consider a star with one long subdivided branch, then one cannot move any tokens from the leaves of the star to the long branch while there are at least two tokens on leaves. 
That being said, if the long branch of the star is ``long enough'' with respect to $k$ then it \emph{should} be possible to reduce parts of it; as some part would be irrelevant. In fact, this observation can be generalized to many other cases. For instance, when we have a large grid minor, then whenever a token slides into the structure it should then be able to slide freely within the structure (while avoiding conflicts with any other tokens in that structure).  
However, proving that a structure can be reduced in the context of reconfiguration is usually a daunting task due to the many moving parts. To overcome this problem, the authors in~\cite{DBLP:journals/corr/abs-2204-05549} introduce a new type of vertices called \emph{black holes}, which can simulate the behavior of a large grid minor by being able to \emph{absorb} as many tokens as they see fit; and then \emph{project} them back as needed. 

Since we need to maintain the girth property\footnote{This is not the only reason we opted to not use black holes; introducing black holes in our algorithm complicates parts of the analysis.}, we do not use the notion of black holes and instead show that when restricted to graphs of girth five or more we can efficiently find structures that behave like large grid minors (from the discussion above) and replace them with subgraphs of size bounded by a function of $k$ that can absorb/project tokens in a similar fashion (and do not decrease the girth of the graph). We note that our strategy for reducing such structures is not limited to graphs of high girth and could in principle apply to any graph. 

At a high level, our \FPT\ algorithm can then be summarized as follows. 
We let $(G, k, I_s, I_t)$ denote an instance of the problem, where $G$ has girth five or more. In a first stage, we show that we can always find a reconfiguration sequence from $I_s$ to $I_s'$ and from $I_t$ to $I_t'$ such that each vertex $v \in I_s' \cup I_t'$ has degree bounded by some function of $k$. This immediately implies that we can bound the size of $L_1 \cup L_2$, where $L_1 = I_s' \cup I_t'$ and $L_2 = N_G(I_s' \cup I_t')$. In a second stage, we show that every connected component $C$ of $L_3 = V(G) \setminus (L_1 \cup L_2)$ can be classified as either a \emph{degree-safe component}, a \emph{diameter-safe component}, a \emph{bad component}, or a \emph{bounded component}. The remainder of the proof consists in showing that degree-safe and diameter-safe components behave like large grid minors and can be replaced by bounded-size gadgets. We then show that bounded components and bad components will eventually have bounded size and we then conclude the algorithm by showing how to bound the total number of components in $L_3$. 

Finally, we note that many interesting questions remain open. In particular, it remains open whether {\sc Token Sliding} admits a (polynomial) kernel on graphs of girth five or more and whether the problem remains tractable if we forbid cycles of length $p \mod q$, for every pair of integers $p$ and $q$, or if we exclude odd cycles. 

\section{Preliminaries}
We denote the set of natural numbers by $\mathbb{N}$.
For $n \in \mathbb{N}$ we let $[n] = \{1, 2, \dots, n\}$.

\paragraph*{Graphs.} We assume that each graph $G$ is finite, simple, and undirected.
We let~$V(G)$ and $E(G)$ denote the vertex set and edge set of $G$, respectively. 
The {\em open neighborhood} of a vertex $v$ is denoted by $N_G(v) = \{u \mid \{u,v\} \in E(G)\}$ and the {\em closed neighborhood} by $N_G[v] = N_G(v) \cup \{v\}$. 
For a set of vertices $Q \subseteq V(G)$, we define $N_G(Q) = \{v \not\in Q \mid \{u,v\} \in E(G), u \in Q\}$ and $N_G[Q] = N_G(Q) \cup Q$. 
The subgraph of $G$ induced by $Q$ is denoted by $G[Q]$, where $G[Q]$ has vertex set~$Q$ and edge set $\{\{u,v\} \in E(G) \mid u,v \in Q\}$. 
We let $G - Q = G[V(G) \setminus Q]$.

A {\em walk} of length $\ell$ from $v_0$ to $v_\ell$ in $G$ is a vertex sequence $v_0, \ldots, v_\ell$, such that for all $i \in \{0, \ldots, \ell-1\}$, $\{v_i,v_{i + 1}\} \in E(G)$.
It is a {\em path} if all vertices are distinct. 
It is a {\em cycle} if $\ell \geq 3$, $v_0 = v_\ell$, and $v_0, \ldots, v_{\ell - 1}$ is a path.
A path from vertex $u$ to vertex $v$ is also called a {\em $uv$-path}.
For a pair of vertices~$u$ and~$v$ in $V(G)$, by $\textsf{dist}_G(u,v)$ we denote the {\em distance} or length of a shortest $uv$-path in $G$ (measured in number of edges and set to $\infty$ if $u$ and $v$ belong to different connected components).
The {\em eccentricity} of a vertex $v \in V(G)$, $\textsf{ecc}(v)$, is equal to $\max_{u \in V(G)}(\textsf{dist}_G(u,v))$.
The {\em diameter} of $G$, $\textsf{diam}(G)$, is equal to $\max_{v \in V(G)}(\textsf{ecc}(v))$. 
The \emph{girth} of $G$, $\textsf{girth}(G)$, is the length of a shortest cycle contained in $G$. If the graph does not contain any cycles (that is, it is a forest), its girth is defined to be infinity.

\paragraph*{Reconfiguration.} In the {\sc Token Sliding} problem we are given a graph $G = (V,E)$ and two independent sets $I_s$ and $I_t$ of $G$, each of size $k \geq 1$. The goal is to determine whether there exists a sequence $\langle I_0, I_1, \ldots, I_\ell \rangle$ of independent sets of size $k$ such that $I_s = I_0$, $I_\ell = I_t$, and $I_i \Delta I_{i+1} = \{u, v\} \in E(G)$ for all $i\in \{0,\ldots, \ell-1\}$. In other words, if we view each independent set as a collection of tokens placed on a subset of the vertices of $G$, then the problem asks for a sequence of independent sets which transforms $I_s$ to $I_t$ by individual token slides along edges of $G$ which maintain the independence of the sets. Note that {\sc Token Sliding} can be expressed in terms of a \emph{reconfiguration graph} $\mathcal{R}(G,k)$. $\mathcal{R}(G,k)$ contains a node for each independent set of $G$ of size exactly $k$. We add an edge between two nodes whenever the independent set corresponding to one node can be obtained from the other by a single reconfiguration step. That is, a single token slide corresponds to an edge in $\mathcal{R}(G,k)$. The {\sc Token Sliding} problem asks whether $I_s, I_t \in V(\mathcal{R}(G,k))$ belong to the same connected component of~$\mathcal{R}(G,k)$. 
\section{Reducing the graph}
Let $(G, k, I_s, I_t)$ be an instance of {\sc Token Sliding}, where $G$ has girth five or more. The aim of this section is to bound the size of the graph by a function of $k$. We start with a very simple reduction rule that allows us to get rid of most twin vertices in the graph. Two vertices $u,v \in V(G)$ are said to be \emph{twins} if $u$ and $v$ have the same set of neighbours, that is, if $N(u) = N(v)$.

\begin{lemma}\label{lem:removal_of_twin_nodes}
Assume $u, v \in V(G) \setminus (I_s \cup I_t)$ and $N(u) = N(v)$. 
Then $(G, k, I_s, I_t)$ is a yes-instance if and only if $(G - \{v\}, k, I_s, I_t)$ is a yes-instance. 
\end{lemma}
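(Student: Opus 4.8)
The plan is to show that a reconfiguration sequence in the larger graph $G$ can be simulated in $G - \{v\}$ and vice versa, with the only issue being what to do when a token sits on $v$. The backward direction is immediate: $G - \{v\}$ is an induced subgraph of $G$, so any reconfiguration sequence from $I_s$ to $I_t$ in $G - \{v\}$ is also a valid reconfiguration sequence in $G$ (all intermediate sets are still independent in $G$, and every slide is along an edge of $G$). Note this uses that $v \notin I_s \cup I_t$, so $I_s$ and $I_t$ are genuinely independent sets in $G - \{v\}$ of the correct size $k$.

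For the forward direction, suppose $\langle I_0, \ldots, I_\ell \rangle$ is a reconfiguration sequence in $G$ with $I_0 = I_s$ and $I_\ell = I_t$. The key observation is that $u$ and $v$ are interchangeable: since $N(u) = N(v)$, for any independent set $I$ with $v \in I$ we have $u \notin I$ (as $u \in N(v)$ would force $v \in N(u)$, impossible since $v \notin N(v)$ — wait, more carefully: if $u$ and $v$ were adjacent then $v \in N(u) = N(v)$, which is false, so $u,v$ are non-adjacent; hence from $v \in I$ independent we cannot conclude $u \notin I$ directly, but if both $u,v \in I$ then since they have the same neighbourhood we may still need care). Let me restructure: the clean statement is that if $I$ is independent and $v \in I$, then $(I \setminus \{v\}) \cup \{u\}$ is also independent, because $N(u) = N(v)$ and so $u$ has no neighbour in $I \setminus \{v\}$ exactly when $v$ has none. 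The plan is therefore to take the given sequence and post-process it: whenever a token slides onto $v$ along an edge $\{w, v\}$, we instead slide it onto $u$ along the edge $\{w, u\}$ (which exists since $w \in N(v) = N(u)$), provided $u$ is currently unoccupied; and symmetrically, a slide off of $v$ becomes a slide off of $u$. The subtlety is that $u$ might already be occupied at the moment we want to route a token there.

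Here the main obstacle, and where I would spend the most care, is handling the case where both $u$ and $v$ would be occupied simultaneously in the rerouted sequence. I would argue this cannot actually arise in a minimal rerouting: since $N(u) = N(v)$, no independent set of $G$ contains both $u$ and $v$ together with any common neighbour, but $u$ and $v$ themselves are non-adjacent, so $\{u,v\}$ can coexist in an independent set. To avoid this, I would maintain the invariant that in the rerouted sequence $v$ is never occupied — any token the original sequence places on $v$ is placed on $u$ instead, and if the original sequence has a configuration with both $u$ and $v$ occupied, we map it to one where only $u$ is occupied and "remember" that there is a pending token; but since we need configurations of size exactly $k$ at every step this bookkeeping must be done so that at most one of $u, v$ is ever used. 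The cleanest route is: define a bijection between configurations by the rule "replace $v$ by $u$ if $v$ is occupied and $u$ is not; otherwise leave unchanged", check this maps independent sets of $G$ to independent sets of $G - \{v\}$, and verify that each elementary slide in $G$ maps to either an elementary slide in $G - \{v\}$ or a no-op (when the slide merely moves a token between $u$ and $v$, both of which project to $u$). Removing consecutive duplicates then yields a valid reconfiguration sequence in $G - \{v\}$ from $I_s$ to $I_t$, completing the proof.
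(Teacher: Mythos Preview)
Your backward direction and the overall strategy for the forward direction are fine and match the paper's approach. The gap is precisely where you flag the ``subtlety'': you worry that when the original sequence places a token on $v$, the vertex $u$ might already be occupied, and your proposed bookkeeping (``remember a pending token'', or the bijection that leaves configurations with both $u$ and $v$ unchanged) does not produce a valid size-$k$ reconfiguration sequence in $G-\{v\}$.

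The missing observation is a reachability (not a static) argument: in any reconfiguration sequence starting from $I_s$, the configurations can \emph{never} contain both $u$ and $v$. Indeed, $u,v\notin I_s$, and if at some step exactly one of them, say $u$, is occupied, then by independence no vertex of $N(u)=N(v)$ carries a token; hence there is no token adjacent to $v$ that could slide onto $v$ in the next step (and $u,v$ are non-adjacent, so the token on $u$ cannot slide directly to $v$ either). By induction, at most one of $u,v$ is occupied at every step of the sequence. Once you have this, your replacement rule ``whenever the original sequence slides onto/off $v$ along $\{w,v\}$, slide onto/off $u$ along $\{w,u\}$ instead'' goes through without any bookkeeping, exactly as the paper does.

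So your plan is the right one; you just need to replace the hand-wavy handling of the both-occupied case with the one-line inductive argument above showing that case never arises.
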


\begin{proof}
Since  $u, v \in V(G) \setminus (I_s \cup I_t)$ and $G - \{v\}$ is an induced subgraph of $G$, it follows that if there exists a reconfiguration sequence $\mathcal{S} = \langle I_0,I_1,\ldots,I_{\ell-1},I_\ell \rangle$ from $I_s$ to $I_t$ in $G - \{v\}$, then the same sequence remains valid in $G$. 

Now assume that there exists a sequence $\mathcal{S} = \langle I_0,I_1,\ldots,I_{\ell-1},I_\ell \rangle$ from~$I_s$ to~$I_t$ in $G$. Since $u, v \in V(G) \setminus (I_s \cup I_t)$, in $I_s$ there are no tokens on $u$ and $v$ and the same holds for $I_t$. Hence, if there exists $I_i$, $1 \leq i \leq \ell-1$ such that $v \in I_i$, then $u \not\in I_i$. The reason is that a token can
be moved to $u$ only via $N(u)$. By assumption $N(u) = N(v)$ and 
$N(v)$ is blocked by the token on $v$. This implies that we can always choose to slide the token to $u$ instead of $v$, as needed. 
\qed
\end{proof}

Note that in a graph of girth at least five twins can have
degree at most one. 

Given~\cref{lem:removal_of_twin_nodes}, we assume in what follows that twins have been reduced. In other words, we let 
$(G, k, I_s, I_t)$ be an instance of {\sc Token Sliding} where $G$ has girth five or more and twins not in $I_s \cup I_t$ have been removed. We now partition our graph into three sets $L_1 = I_s \cup I_t$, $L_2 = N_G(L_1)$, and $L_3 = V(G) \setminus (L_1 \cup L_2)$. 

\begin{lemma}\label{lem:bound_l2_l3_degree}
If $u \in L_2 \cup L_3$, then $u$ has at most $|L_1| \leq 2k$ neighbors in $L_1 \cup L_2$, i.e., $|N_{L_1 \cup L_2}(u)| \leq 2k$.
\end{lemma}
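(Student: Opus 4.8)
The plan is to prove this by a direct counting argument exploiting the girth-five condition, which forbids both multiple edges and short cycles. Let $u \in L_2 \cup L_3$ and suppose towards a contradiction that $u$ has two distinct neighbors $a, b \in L_1$. Since $L_1 = I_s \cup I_t$ and both $I_s$ and $I_t$ are independent sets, $a$ and $b$ cannot both lie in $I_s$ (they would be adjacent through being at distance two via $u$? no—actually being at distance two is fine). Let me reconsider: the real point is not that $u$ has few neighbors in $L_1$, but few neighbors in $L_1 \cup L_2$ total.

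First I would observe that $|L_1| = |I_s \cup I_t| \leq |I_s| + |I_t| = 2k$, so trivially $|N_{L_1}(u)| \leq |L_1| \leq 2k$. The crux is therefore to show that a vertex $u \in L_2 \cup L_3$ has \emph{no} neighbor in $L_2$, or more precisely that its total neighborhood in $L_1 \cup L_2$ is still at most $|L_1| \leq 2k$; I expect the intended statement is that every neighbor of $u$ in $L_2$ can be ``charged'' injectively to a distinct vertex of $L_1$. To see this, take any neighbor $w \in N_{L_1 \cup L_2}(u)$. If $w \in L_1$ we charge $w$ to itself. If $w \in L_2$, then by definition of $L_2 = N_G(L_1)$, $w$ has some neighbor $x_w \in L_1$; charge $w$ to $x_w$. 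The key step is to argue this charging is injective: if two distinct neighbors $w, w'$ of $u$ in $L_2$ were charged to the same vertex $x \in L_1$, then $u$–$w$–$x$–$w'$–$u$ would be a closed walk of length four, and since $u, w, x, w'$ are pairwise distinct (girth five rules out $w = w'$ already handled, $u = x$ would need $u \in L_1$ contradicting $u \in L_2 \cup L_3$, and $w = x$ or $w' = x$ would force an edge inside $L_1$ incident to... actually $w \in L_2$ and $x \in L_1$ are in disjoint sets) this is an induced $C_4$, contradicting girth five. Similarly a neighbor $w \in L_2$ charged to $x \in L_1$ cannot collide with a neighbor $w' = x' \in L_1$ of $u$ with $x = x'$, since that would give the triangle $u$–$w$–$x$–$u$, again forbidden.

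Hence the charging map from $N_{L_1 \cup L_2}(u)$ to $L_1$ is well-defined and injective, giving $|N_{L_1 \cup L_2}(u)| \leq |L_1| \leq 2k$, as claimed.

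The main obstacle I anticipate is bookkeeping the degenerate cases in the injectivity argument—namely ensuring all four vertices in the putative $C_4$ (or three in the putative triangle) are genuinely distinct so that the girth bound applies. This requires carefully using that $L_1$, $L_2$, and $L_3$ are pairwise disjoint by construction, that $u \notin L_1$, and that $I_s, I_t$ being independent prevents certain edges; none of this is deep, but it is the part where a sloppy proof could go wrong. Everything else is an immediate consequence of the definitions $L_1 = I_s \cup I_t$, $L_2 = N_G(L_1)$, and $|I_s| = |I_t| = k$.
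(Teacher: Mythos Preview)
Your proposal is correct and takes essentially the same approach as the paper: both arguments exploit the girth-five hypothesis to show that each neighbor of $u$ in $L_1 \cup L_2$ can be injectively charged to a vertex of $L_1$ (the paper phrases this as the $L_1$-neighborhoods of the vertices in $N_{L_2}[u]$ being pairwise disjoint, which is the same injection). Your explicit charging formulation is arguably cleaner than the paper's case split, and the bookkeeping you flag about distinctness of the four (resp.\ three) vertices is exactly what is needed; note only that the $4$-cycle need not be \emph{induced}, but any chord would yield a triangle, so girth five is violated either way.
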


\begin{proof}
Assume $u_1$ is a vertex in $L_2$ and $u_2 \in N_{L_2}(u_1)$ is a neighbor of $u_1$ in~$L_2$. If~$u_1$ and $u_2$ have a common neighbor $u_3 \in L_1$, then this would imply the existence of a triangle in $G$, a contradiction. 

Now assume $u_1 \in L_3$ and assume $u_2, u_3 \in N_{L_2}(v_1)$ are two neighbors of $u_1$ in~$L_2$. If $u_2$ and $u_3$ have a common neighbor $u_4 \in L_1$ this would imply the existence of a $C_4$ in $G$, a contradiction. 

Hence, for any vertex $u \in L_2 \cup L_3$ we have $N_{L_1}(v) \cap N_{L_1}(w) = \emptyset$ for all $v, w \in N_{L_2}[u]$. 
Since each vertex in $L_2$ has at least one neighbor in $L_1$ by definition, each vertex $u \in L_2 \cup L_3$ can have at most one neighbor in $L_2$ for each of its non-neighbor in $L_1$, for a total of $|L_1| \leq 2k$ neighbors in $L_1 \cup L_2$.
\qed
\end{proof}

\subsection{Safe, bounded, and bad components}
Given $G$ and the partition $L_1 = I_s \cup I_t$, $L_2 = N_G(L_1)$, and $L_3 = V(G) \setminus (L_1 \cup L_2)$ we now classify components of $G[L_3]$ into four different types. 

\begin{definition}\label{l3_component_types}
Let $C$ be a maximal connected component in $G[L_3]$. 

\vspace{-2mm}
\begin{itemize}
\item We call $C$ a \emph{diameter-safe component} whenever $\textsf{diam}(G[V(C)]) > k^3$.\\[-2mm]
%\item We call $C$ a \emph{degree-safe component} whenever $G[V(C)]$ has a vertex $u$ with at least $k^2 + 1$ neighbors in $C$ and at least $k^2$ of those neighbors have at least one other neighbor in $C$.\\[-2mm]
\item We call $C$ a \emph{degree-safe component} whenever $G[V(C)]$ has a vertex $u$ with at least $k^2 + 1$ neighbors $X$ in $C$ and at least $k^2$ vertices of $X$ have degree two in $G[V(C)]$.\\[-2mm]
%\item We call $C$ a \emph{bounded component} whenever $\textsf{diam}(G[V(C)]) \leq k^3$ and all vertices of $C$ have at most $k^2$ neighbors in $C$.\\[-2mm]
\item We call $C$ a \emph{bounded component} whenever $\textsf{diam}(G[V(C)]) \leq k^3$ and no vertex of $C$ has degree more than $k^2$ in $G[V(C)]$.\\[-2mm]
%\item We call $C$ a \emph{bad component} whenever $C$ has a vertex $u$ with at least $k^2 + 1$ neighbors in $C$ but $C$ is not diameter-safe nor degree-safe.
\item We call $C$ a \emph{bad component} otherwise.
\end{itemize}
\end{definition}

Note that every component of $G[L_3] = G - (L_1 \cup L_2)$ is  \emph{safe} (degree- or diameter-safe), bad, or bounded. 

\begin{lemma}\label{bounded_components_size}
A bounded component $C$ in $G[L_3]$ contains at most $k^{2k^3}$ vertices, i.e., $|V(C)| \leq k^{2k^3}$.
\end{lemma}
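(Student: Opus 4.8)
The statement to prove is that a bounded component $C$ in $G[L_3]$ satisfies $|V(C)| \le k^{2k^3}$. By definition of a bounded component, $\textsf{diam}(G[V(C)]) \le k^3$ and every vertex of $C$ has degree at most $k^2$ in $G[V(C)]$. The plan is to use the classical ball-growing (BFS layering) argument: fix an arbitrary vertex $r \in V(C)$ and stratify $V(C)$ into its distance classes $S_0 = \{r\}$, $S_1, S_2, \dots$, where $S_i = \{v \in V(C) : \textsf{dist}_{G[V(C)]}(r,v) = i\}$. Since the diameter of $G[V(C)]$ is at most $k^3$, there are no vertices at distance more than $k^3$ from $r$, so $V(C) = \bigcup_{i=0}^{k^3} S_i$.

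First I would bound the size of each layer. Every vertex in $S_{i+1}$ has a neighbor in $S_i$ (otherwise its distance to $r$ would exceed $i+1$), so $|S_{i+1}|$ is at most the total number of edges leaving $S_i$, which is at most $|S_i|$ times the maximum degree. Since every vertex has degree at most $k^2$ in $G[V(C)]$, we get $|S_{i+1}| \le k^2 \cdot |S_i|$. Starting from $|S_0| = 1$, induction on $i$ gives $|S_i| \le (k^2)^i = k^{2i}$ for all $i \ge 0$.

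Next I would sum over all layers: $|V(C)| = \sum_{i=0}^{k^3} |S_i| \le \sum_{i=0}^{k^3} k^{2i}$. This is a geometric sum, and one can crudely bound it — for instance, $\sum_{i=0}^{k^3} k^{2i} \le (k^3+1) \cdot k^{2k^3}$, and for $k \ge 1$ a slightly more careful (or slightly more generous) estimate absorbs the $(k^3+1)$ factor, yielding the claimed bound $|V(C)| \le k^{2k^3}$. (One might instead state the bound as $k^{2k^3+1}$ or similar and note that the exact constant in the exponent is immaterial for the FPT conclusion; I would check whether the paper wants the tight form and, if so, handle the edge case $k=1$ separately, where $C$ has diameter at most $1$ and maximum degree at most $1$, hence at most $2$ vertices.)

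The argument involves no real obstacle — it is the standard observation that bounded diameter plus bounded degree implies bounded size. The only thing requiring a moment's care is the arithmetic of swallowing the geometric-series prefactor into the stated exponent, which is a routine inequality rather than a genuine difficulty.
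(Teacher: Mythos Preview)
Your approach is essentially the same as the paper's: the paper takes a (BFS) spanning tree rooted at some vertex, observes that each vertex has at most $k^2$ children and the height is at most $k^3$, and concludes $|V(C)| \le k^{2k^3}$. Your BFS-layer argument is an equivalent phrasing, and you are in fact more careful than the paper about the geometric-series prefactor, which the paper silently absorbs into the stated bound.
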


\begin{proof}
Let $T$ be a spanning tree of $C$ and let $u \in V(C)$ denote the root of $T$. Each vertex in $T$ has at most $k^2$ children given the degree bound of $C$ and the height of the tree is at most $k^3$ given the diameter bound of $C$. Hence the total number of vertices in $C$ is at most $k^{2k^3}$.
\qed
\end{proof}

We now describe a crucial property of degree-safe and diameter-safe components, which we call the \emph{absorption-projection property}. We note that this notion is similar to the notion of black holes introduced in~\cite{DBLP:journals/corr/abs-2204-05549}. The key (informal) insight is that for a safe component $C$ we can show the following:

\begin{enumerate}
    \item If there exists a reconfiguration sequence $\mathcal{S} = \langle I_0,I_1,\ldots,I_{\ell-1},I_\ell \rangle$ from $I_s$
    to~$I_t$, then we may assume that $I_i \cap N_G(V(C)) \leq 1$, for $0 \leq i \leq \ell$. 
    \item A safe component can \emph{absorb} all $k$ tokens, i.e, a safe component contains an independent set of size at least $k$ and whenever a token reaches $N_G(V(C))$ then we can (but do not have to) absorb it into $C$ (regardless of how many tokens are already in $C$). Moreover, a safe component can then \emph{project} the tokens back into its neighborhood as needed.  
\end{enumerate}

Let us start by proving the absorption-projection property for degree-safe components. An \emph{$s$-star} is a vertex with $s$ pairwise non-adjacent neighbors, which are called the leaves of the
$s$-star. A \emph{subdivided $s$-star} is an $s$-star where each edge is subdivided (replaced by a new vertex of degree two adjacent to the endpoints of the edge) any number of times. We say that each leaf of a subdivided star belongs to a \emph{branch} of the star. 

\begin{lemma}\label{subdivided_star_degree_safe_components}
Let $C$ be a degree-safe component in $G[L_3]$. Then $C$ contains an induced subdivided $k$-star where all $k$ branches have length more than one.
\end{lemma}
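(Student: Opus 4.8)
The plan is to exploit the hypothesis that a degree-safe component $C$ has a vertex $u$ with a large neighborhood $X$ inside $C$, of which at least $k^2$ vertices have degree exactly two in $G[V(C)]$. Since $G$ has girth at least five, the neighborhood $X$ of $u$ is an independent set (otherwise $u$ together with an edge in $X$ would form a triangle). Pick the $k^2$ vertices of $X$ that have degree two in $G[V(C)]$; call this set $X'$. Each vertex $w \in X'$ has $u$ as one neighbor and exactly one other neighbor $w'$ in $C$. I claim these ``second neighbors'' $w'$ are all distinct and distinct from $u$: if two vertices $w_1, w_2 \in X'$ shared the same second neighbor $w'$, then $u, w_1, w', w_2$ would form a $C_4$; and $w' = u$ is impossible since then $\{u,w\}$ would be a multi-edge or $w$ would have degree one. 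So for each $w \in X'$ the path $u, w, w'$ is an induced path of length two, and distinct $w$'s give internally disjoint such paths that only share the endpoint $u$.

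Next I would argue that from among these $k^2$ length-two branches we can select $k$ of them whose endpoints $w'$ are pairwise non-adjacent, turning the union into an induced subdivided $k$-star centered at $u$ with every branch of length exactly two (hence more than one). To get the independence of the selected endpoints, build an auxiliary graph $H$ on the $k^2$ endpoints $\{w' : w \in X'\}$ where two endpoints are adjacent iff they are adjacent in $G$. The key observation is that $H$ has maximum degree at most... well, more carefully: I want an independent set of size $k$ in $H$. Since each $w'$ has degree two in $G[V(C)]$? No — $w'$ need not have degree two; only the $w$'s do. Instead I use girth: can one endpoint $w'_1$ be adjacent to many other endpoints $w'_2, w'_3, \dots$? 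If $w'_1$ is adjacent to both $w'_2$ and $w'_3$, that alone is fine, but I need to bound how often this happens. The cleaner route: greedily pick branches. Pick any branch with endpoint $p_1$; it can be adjacent to at most... hmm, $p_1$ could a priori be adjacent to arbitrarily many other endpoints.

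So the honest approach is: note that among the $k^2$ endpoints, repeatedly pick an endpoint $w'$, add its branch to our star, and discard every other branch whose endpoint is adjacent to $w'$ or equal to a already-used vertex. The issue is bounding the number discarded per step. Here girth five helps decisively: if $w'$ is adjacent to two other endpoints $w'_a$ and $w'_b$ (with corresponding $w_a, w_b \in X'$), then $w', w'_a, w_a, u, w_b, w'_b$ — let me check for a short cycle. We have $u \sim w_a \sim w'_a \sim w'$ and $u \sim w_b \sim w'_b \sim w'$, so $u, w_a, w'_a, w', w'_b, w_b, u$ is a closed walk of length six; if all six vertices are distinct it is a $C_6$, which is allowed. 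So girth alone does not bound the degree of $w'$ in $H$. Therefore I would instead use a counting/averaging argument: since $|X'| = k^2$, and we want $k$ pairwise-non-adjacent endpoints, it suffices that $H$ — which lives inside $G[V(C)]$ — has an independent set of size $k$ whenever it has $k^2$ vertices; by Ramsey/Turán-type reasoning this holds unless $H$ contains a vertex of degree $\geq k$, i.e., an endpoint $w'$ adjacent to $\geq k$ other endpoints. But then $w'$ together with those $\geq k$ endpoints and their length-two paths back to $u$ — actually in that situation $w'$ itself, its $k$ neighboring endpoints, call them the new leaves, and take the branch of length one from $w'$... I realize the cleanest statement is just: either we directly find $k$ independent endpoints (done), or some endpoint $w'$ has $\geq k$ neighbors among the endpoints, and then $w'$ is the center of an induced structure we can massage — but one must check induced-ness carefully, and this is exactly where girth five is used again to rule out extra chords.

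The main obstacle, then, is this endpoint-selection step: extracting $k$ branches forming an \emph{induced} subdivided star, i.e., simultaneously ensuring the endpoints are pairwise non-adjacent and that no endpoint is adjacent to the ``$w$-vertex'' of another branch (the latter ruled out by girth: $u, w_1, w'_1, w_2, u$ would be a $C_4$ if $w'_1 \sim w_2$... wait $w'_1 \sim w_2 \sim u \sim w_1 \sim w'_1$ is a $C_4$, good, so that case is automatically excluded). So really the only thing to handle is pairwise non-adjacency of the endpoints, and the plan is to do it by a greedy argument where at each step one deletes the neighbors of the chosen endpoint; I expect the write-up to either (i) invoke that $G[V(C)]$ restricted to the $k^2$ endpoints has bounded degree because of the overall degree bound implicit in being a component — no, that fails for degree-safe components by definition — or more likely (ii) simply observe that if the endpoint graph $H$ has no independent set of size $k$ then it has $\geq \binom{k^2}{2}/(k-1)$-ish edges concentrated so that some endpoint has high degree, and then re-root the star at that endpoint, again using girth five to guarantee the result is induced. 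I would present the greedy version as the clean path and flag the re-rooting alternative as a fallback.
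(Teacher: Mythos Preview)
Your setup matches the paper's exactly: take the $k^2$ degree-two neighbours $X'$ of $u$, pass to their second neighbours $Y=\{w':w\in X'\}$, observe these are distinct and that no $w'$ is adjacent to any $w_j$ with $j\neq i$ (both by girth), so the only obstruction to an induced subdivided $k$-star centred at $u$ is finding $k$ pairwise non-adjacent vertices in the auxiliary graph $H=G[Y]$.

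Where you drift is in resolving this last step. You correctly see the dichotomy---either every vertex of $H$ has degree at most $k-1$ (then greedy deletion of closed neighbourhoods works, since each step removes at most $k$ of the $k^2$ vertices), or some $w'\in Y$ has at least $k$ neighbours in $H$---but in the second case you reach for ``re-rooting the star at $w'$'' and start worrying about induced-ness of a new configuration. That can be made to work, but it is unnecessary: $H$ is an induced subgraph of a girth-$\geq 5$ graph, hence triangle-free, so the $\geq k$ neighbours of $w'$ in $H$ are themselves pairwise non-adjacent. Those $k$ vertices are your independent set in $Y$, and you keep $u$ as the centre. This is precisely what the paper does, in two lines.

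So the proposal is on the right track and would succeed, but you should replace the re-rooting fallback (and the vague Ramsey/Tur\'an gesture) with the one-line observation that triangle-freeness of $H$ turns the high-degree case into the trivial case.
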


\begin{proof}
Since $C$ is a degree-safe component, it must contain a vertex $u$ with at least $k^2$ neighbors in $C$ and each one of these
neighbors must have another neighbor in $C$. Note that all
of these vertices must be distinct, as otherwise we could find
a cycle of length three or four. 

Let us call the distance-one and distance-two neighbors of $u$ in $C$ the first level and second level. That is, we let $N_1(u) = N_C(u) \setminus \{u\}$ and $N_2(u) = N_C(N_1(u)) \setminus (N_1(u) \cup \{u\})$.
    
Note that the first level, $N_1(u)$, is an independent set, since otherwise that would imply the existence of a triangle. Also, vertices in the second level, $N_2(u)$, cannot be connected to more than one vertex of the first level, since that would imply the existence of a $C_4$. 
    
As for the second level, it contains at least $k^2$ vertices and we can have edges between those vertices. We claim that $G_2 = G[N_2(u)]$
contains an independent set of size $k$. Assume first that 
$G_2$ contains a vertex $v$ of degree $k$. Then, since $G_2$ is
triangle free, the $k$ neighbors of $v$ form the required independent set. Otherwise, all vertices of $G_2$ have degree
at most $k-1$. We iteratively add one vertex $v$ to the independent set and remove $N[v]$ from $G_2$. This can be repeated for $k$ times leading to the required independent set. 
Therefore, we get an induced subdivided star with at least $k$ branches of length at least two and there is no edge between the different branches.
\qed
\end{proof}

\begin{figure}
\centering
\includegraphics[scale=0.4]{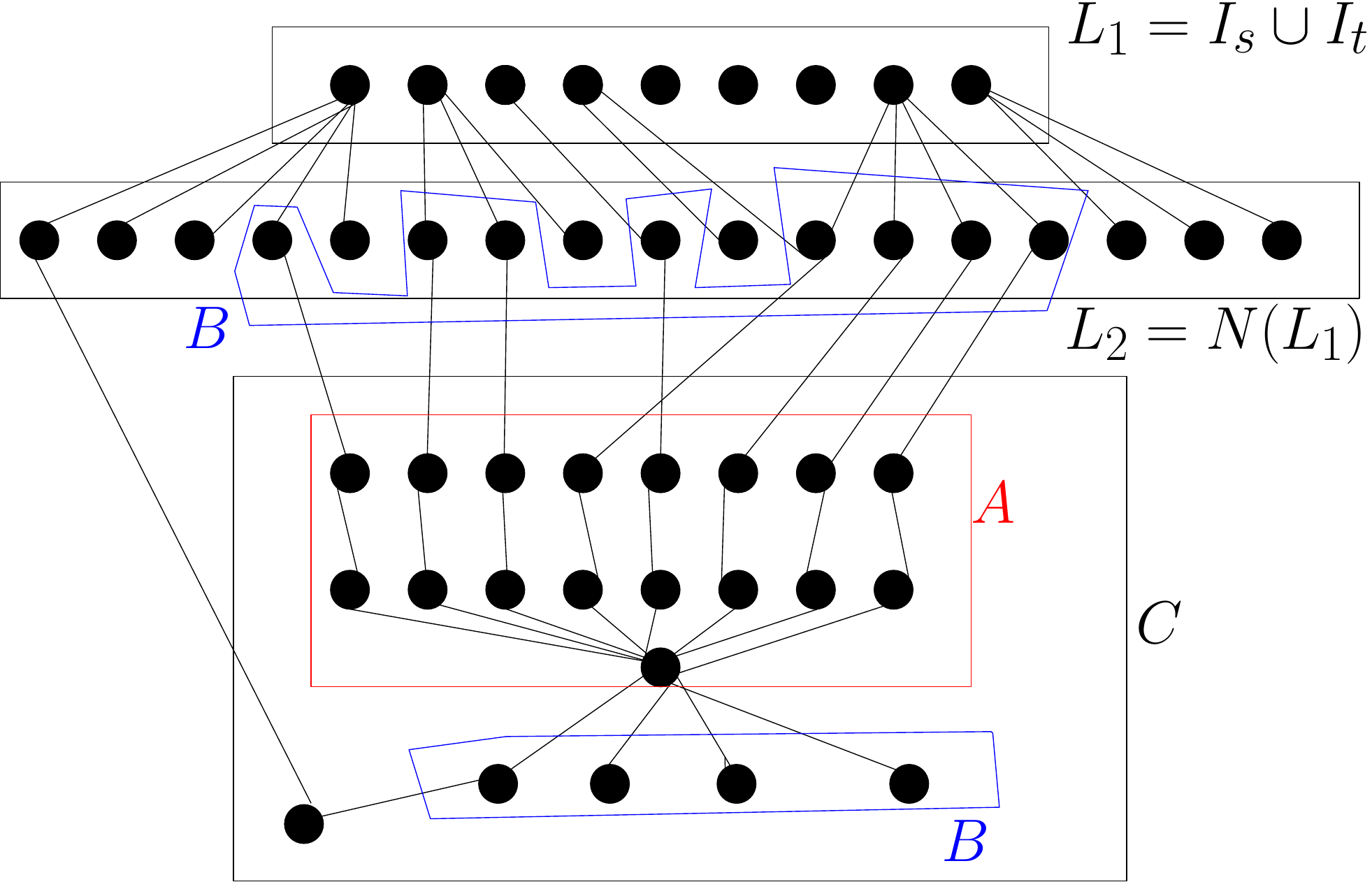}
\caption{An illustration of a degree-safe component $C$.}
\label{fig:degree_safe}
\end{figure}

\begin{lemma}\label{lem:deg-safe-behavior}
Let $C$ be a degree-safe component in $G[L_3]$ and let $A$ be an induced subdivided $k$-star contained in $C$ where all branches have length exactly two. Let $B = N_G(A)$. If $(G, k, I_s, I_t)$ is a yes-instance, then there exists a reconfiguration sequence from $I_s$ to $I_t$ in $G$ where we have at most one token on a vertex of $B$ at all times.
\end{lemma}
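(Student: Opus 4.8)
Write $u$ for the centre of $A$, let $m_1,\dots,m_k$ be the vertices of $A$ at distance one from $u$ (its \emph{middles}) and $\ell_1,\dots,\ell_k$ its leaves, with $\ell_i$ adjacent to $m_i$; since $A$ is induced we have $N_A(u)=\{m_1,\dots,m_k\}$, $N_A(m_i)=\{u,\ell_i\}$ and $N_A(\ell_i)=\{m_i\}$. Because $A\subseteq C\subseteq L_3$ and, by definition of $L_2=N_G(L_1)$, no vertex of $L_3$ is adjacent to a vertex of $L_1$, we get $A\cup B\subseteq L_2\cup L_3$, so neither $I_s$ nor $I_t$ places a token on $A\cup B$. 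The plan will repeatedly exploit the adjacency restrictions forced by $\textsf{girth}(G)\ge5$: a vertex $b\in B$ adjacent to $u$ is adjacent to no other vertex of $A$; a vertex $b\in B$ adjacent to a middle $m_i$ is adjacent to neither $u$, nor any other middle, nor $\ell_i$ (though possibly to leaves of other branches); and the middles are pairwise non-adjacent and non-adjacent to every leaf of another branch. In particular, for every $b\in B$ the set $A\setminus N_G(b)$ still contains an independent set of size $k$.

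\textbf{Parked configurations and the normalisation map.} I will call a size-$k$ independent set $J$ \emph{parked} if every token it places on $A\cup B$ lies on a leaf of $A$; in a parked configuration $B$ is empty and (by the restrictions above) $u$ and every middle has no occupied neighbour. For a size-$k$ independent set $I$ set $p(I)=|I\cap(A\cup B)|$ and let $\phi(I)$ be the parked configuration obtained from $I$ by deleting the tokens on $A\cup B$ and placing tokens on $\ell_1,\dots,\ell_{p(I)}$ instead — this is again an independent set of size $k$, since leaves have no neighbour outside $A$. Note $\phi(I_s)=I_s$ and $\phi(I_t)=I_t$. The plan is: given a reconfiguration sequence $\langle I_0,\dots,I_\ell\rangle$ from $I_s$ to $I_t$, show that for each consecutive pair $I_j,I_{j+1}$ there is a walk in $\mathcal{R}(G,k)$ from $\phi(I_j)$ to $\phi(I_{j+1})$ along which at most one token is ever on $B$, and then concatenate.

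\textbf{The four transition types.} Write $I_{j+1}=I_j\triangle\{x,y\}$ with $\{x,y\}\in E(G)$ and $x\in I_j$. If $\{x,y\}\cap(A\cup B)=\emptyset$ then $p(I_j)=p(I_{j+1})$ and $\phi(I_j)\triangle\phi(I_{j+1})=\{x,y\}$ is a single valid slide (the leaf tokens conflict with nothing outside $A$) that keeps $B$ empty. If $x,y\in A\cup B$ then $p$ and the restriction to $V(G)\setminus(A\cup B)$ are unchanged, so $\phi(I_j)=\phi(I_{j+1})$ and there is nothing to do. Otherwise exactly one of $x,y$ lies in $B$ and the other outside $A\cup B$; say $x\notin A\cup B$ and $y\in B$ (the reverse case is symmetric), so that $p(I_{j+1})=p(I_j)+1\le k$. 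In this case I start from $\phi(I_j)$, use slides inside $A$ — pushing blocking tokens one at a time through $u$ into a free branch — to re-arrange the $\le k-1$ parked tokens into an independent set in $A\setminus N_G(y)$ that also leaves the appropriate middle or leaf free for re-entry, then slide $x\to y$, then route the token on $y$ back into a free branch, and finally re-arrange inside $A$ to reach $\phi(I_{j+1})$; a token sits on $B$ only between the slide $x\to y$ and its immediate re-entry into $A$.

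\textbf{The main obstacle.} The substance of the argument is this third transition type. It rests on the two girth-$5$ facts isolated above: that $N_G(y)\cap A$ is restricted enough (either $\{u\}$, or a single middle possibly together with leaves of other branches, or only leaves) that $A\setminus N_G(y)$ still contains a $k$-element independent set — hence there is room for the $\le k-1$ parked tokens even after they are forced off the leaves adjacent to $y$ — and that whenever $B$ is empty the centre $u$ and all middles are free, so that tokens can always be shuffled one at a time through $u$ into an empty branch. Making the re-entry routing fully explicit requires a short case analysis on the shape of $N_G(y)\cap A$, choosing which tokens to hold temporarily on middles versus leaves; and one must check that $p(\cdot)$ changes by the same amount along $\langle I_j\rangle$ and along the constructed walk, so that the concatenated sequence terminates exactly at $\phi(I_t)=I_t$. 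Everything else — the first two transition types and the verification that each elementary move preserves independence — is routine.
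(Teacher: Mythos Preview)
Your proposal is correct and follows essentially the same strategy as the paper: keep the tokens that are inside $A\cup B$ in a canonical ``parked'' position, ignore moves internal to $A\cup B$, copy moves external to $A\cup B$, and for an entry/exit through some $b\in B$ use the girth-$5$ constraints on $N_G(b)\cap A$ to shuffle the at most $k-1$ parked tokens out of the way (through the centre into free branches) before absorbing or projecting. The only differences are cosmetic --- you package the argument via the map $\phi$ whereas the paper describes the procedure operationally --- and there is a small slip of the pen (``leaves have no neighbour outside $A$'' should read ``outside $A\cup B$'', and in a parked configuration a middle $m_i$ can have its own leaf $\ell_i$ occupied), but neither affects the argument.
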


\begin{proof}
First, note that the existence of $A$ follows from \cref{subdivided_star_degree_safe_components} and that it is indeed the case that $I_s \cap B = I_t \cap B = \emptyset$. Let $r$ denote the root of the induced subdivided $k$-star and let $N_1$ and $N_2$ denote the first and second levels of subdivided the star, respectively. Let us explain how we can adapt a transformation $\mathcal{S}$ from $I_s$ to $I_t$ into a transformation containing at most one token on a vertex of $B$ at all times and such that, at any step, the number of tokens in $A \cup B$ in both transformations is the same and the positions of the tokens in $V(G) \setminus (A \cup B)$ are the same.

Assume that, in the transformation $\mathcal{S}$, a token is about to reach a vertex $b \in B$, that is, we consider the step right before a token is about to slide into $B$. We first move all tokens residing in $A$, if any, to the second level of their branches, i.e, to $N_2$. This is possible as $A$ is an induced subdivided star 
and there are no other tokens on $B$. Note that we can assume that there is no token on $r$ (and hence every token is on a branch and ``the branch'' of a token is well defined) since we can otherwise slide this token to one of the empty branches while $B$ is still empty of tokens. Then we proceed as follows:

\begin{itemize}
    \item If $b$ is a neighbor of the root $r$ of the subdivided star, then $b$ is not a neighbor of any vertex at the second level of $A$, since otherwise this would create a cycle of length four. Hence, we can slide the token into $b$ and then $r$ and then some empty branch of $A$ (which is possible since we have $k$ branches in $A$). 
    \item Otherwise, if $b$ has no neighbors in the first level $N_1$ of $A$, we choose a branch that has a neighbor $a$ of $b$ in $N_2$ (which exists since $b$ is not adjacent to $r$ nor $N_1$). Then, if the branch of $a$ already contains a token, we can safely slide the token into another branch by going to the first level, then the root $r$, then to another empty branch of $A$. Now we slide all tokens in $A$ to the first level of their branch and finally we slide the initial token to $b$ and then to $a$. 
    \item Finally, if $b$ has neighbors in the first level of $A$, note that it cannot have more than one neighbor in $N_1$ since that would imply the existence of a cycle of length four. Let $a$ denote the unique neighbor of $b$ in $N_1$. If the branch of~$a$ has a token on it, then we safely slide it into another empty branch. Now we slide all tokens in $A$ to the first level of their branch and finally we slide the initial token to $b$ and then to $a$. 
\end{itemize}
   
Note that all of above slides are reversible and we can therefore use a similar strategy to project tokens from $A$ to $B$. If, in $\mathcal{S}$, a token is about to leave the vertex $b \in B$, then we can similarly move a token from $A$ to $b$ and then perform the same move. Finally, if a reconfiguration step in $\mathcal{S}$ consists of moving tokens in $A\cup B$ to $A \cup B$, we ignore that step. And, if it consists of moving a token from $V(G) \setminus (A \cup B)$ to $V(G) \setminus (A \cup B)$ we perform the same step.
   
It follows from the previous procedure that whenever $(G, k, I_s, I_t)$ is a yes-instance we can find a reconfiguration sequence from $I_s$ to $I_t$ in $G$ where we have at most one token in $B$ at all times, as claimed (see~\cref{fig:degree_safe}). 
\qed
\end{proof}

\begin{corollary}\label{cor:deg_safe}
Let $C$ be a degree-safe component. If $(G, k, I_s, I_t)$ is a yes-instance, then there exists a reconfiguration sequence from $I_s$ to $I_t$ in $G$ where we have at most one token in $N(C) \subseteq L_2$ at all times. 
\end{corollary}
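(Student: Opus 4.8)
The plan is to reduce \cref{cor:deg_safe} essentially to a bookkeeping exercise on top of \cref{lem:deg-safe-behavior}. First I would observe that the statement of \cref{lem:deg-safe-behavior} already gives, for a specific induced subdivided $k$-star $A$ with branches of length exactly two, a reconfiguration sequence from $I_s$ to $I_t$ in which at most one token sits on $B = N_G(A)$ at any time. So the only gap between that lemma and the corollary is that $B = N_G(A)$ need not coincide with $N(C)$: the degree-safe component $C$ may contain vertices outside $A \cup B$, and tokens could in principle enter $C$ through those other vertices of $N(C)$.

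The key step is therefore to argue that we may assume all token traffic in and out of $C$ goes through $B$. Here I would use the structure of $C$ from \cref{subdivided_star_degree_safe_components}: $C$ is connected, $A$ is an induced subdivided $k$-star whose leaves lie in $N_2(r)$, and $C$ has girth at least five. The plan is to show that whenever a token is about to slide from some $w \in N(C) \setminus B$ into $C$ (or is about to leave $C$ via such a $w$), we can instead simulate that move entirely inside $A \cup B$ in the spirit of the case analysis in \cref{lem:deg-safe-behavior}: concretely, first clear $B$ of tokens and push any tokens currently in $C \setminus A$ and in $A$ toward the leaves of $A$, and then, since $C$ is connected, route the incoming token along a path in $C$ to a branch of $A$, using the $k$ available branches to dodge the at most $k-1$ other tokens. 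In fact the cleanest way to phrase this is: apply \cref{lem:deg-safe-behavior} to conclude there is a transformation with at most one token on $B$, and then observe that any step of that transformation that moves a token between $C$ and $N(C)$ can be rerouted so that it uses a vertex of $B$ — because $A$ together with its length-two branches reaches all the way to $N_2(r)$, and a token that has entered $C$ anywhere can slide (via a path in the connected graph $G[V(C)]$, avoiding the other tokens which we have parked on leaves of $A$) to an empty branch of $A$, from which it is equivalent to having entered through the corresponding vertex of $B$.

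I expect the main obstacle to be the conflict-avoidance argument: when we reroute an incoming token along a path inside $C$, that path may pass through vertices currently occupied by other tokens. This is exactly the issue that \cref{lem:deg-safe-behavior} already resolves in miniature by using the fact that there are $k$ branches and at most $k-1$ other tokens, so one can always shuffle an offending token out of the way through the root $r$. The care needed here is to make sure that parking the at most $k-1$ other tokens at the leaves (second level) of distinct branches of $A$ — which is possible since $A$ is an \emph{induced} subdivided $k$-star and there are no edges between branches — genuinely frees up enough of $C$ for the incoming token to travel; and that after delivering the incoming token we can restore all tokens to the configuration they would have had in the original sequence. Since all the moves used are reversible, the projection direction (a token leaving $C$) follows symmetrically, and steps internal to $C$ or internal to $V(G) \setminus (C \cup N(C))$ are copied verbatim, exactly as in the proof of \cref{lem:deg-safe-behavior}. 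Putting these pieces together yields a reconfiguration sequence from $I_s$ to $I_t$ with at most one token in $N(C) \subseteq L_2$ at all times.
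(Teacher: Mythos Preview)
Your approach is essentially the paper's: when a token arrives at some $c \in N(C)$, absorb it into $A$ --- directly via \cref{lem:deg-safe-behavior} if $c \in B$, otherwise by sliding it along a path inside $C$ until it first meets $B$ and then invoking the lemma; projection is the reverse, and the inductive invariant that all previously absorbed tokens already sit in $A$ is what makes such a path conflict-free.

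One framing issue is worth flagging. You state the goal as ensuring that ``all token traffic in and out of $C$ goes through $B$'', and you trigger your rerouting only when a token is about to slide from $w \in N(C) \setminus B$ into $C$. That alone does not bound the number of tokens on $N(C)$: a token can land on $N(C) \setminus B$ and later leave again without ever crossing into $C$, and several such tokens could accumulate. The correct trigger --- which the paper uses and which your absorption mechanism in fact supports --- is to absorb a token the moment it reaches any vertex of $N(C)$, regardless of whether it was headed into $C$. With that adjustment your argument goes through and matches the paper's.
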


\begin{proof}
Assume a token slides to a vertex $c \in N(C)$ (for the first time). If $c \in B$, then the result follows from~\cref{lem:deg-safe-behavior}. Otherwise, we can follow a path $P$ contained in $C$ that leads to the root of the induced $k$-subdivided star (such a path exists since $c \in N(C)$ and $C$ is connected) and right before we reach $B$ we then again can apply~\cref{lem:deg-safe-behavior}. Note that, regardless of whether $c$ is in $B$ or not, once the token reaches $N(C)$ we can assume that it is immediately absorbed by the degree-safe component (and later projected as needed). This implies that we can always find a path $P$ to slide along such that $N[P]$ contains no tokens.  
\qed
\end{proof}

We now turn our attention to diameter-safe components and show that they have a similar absorption-projection behavior as degree-safe components. Given a component $C$ we say that a path $A$ in $C$ is a \emph{diameter path} if $A$ is a longest shortest path in $C$. 

\begin{figure}
\centering
\includegraphics[scale=0.4]{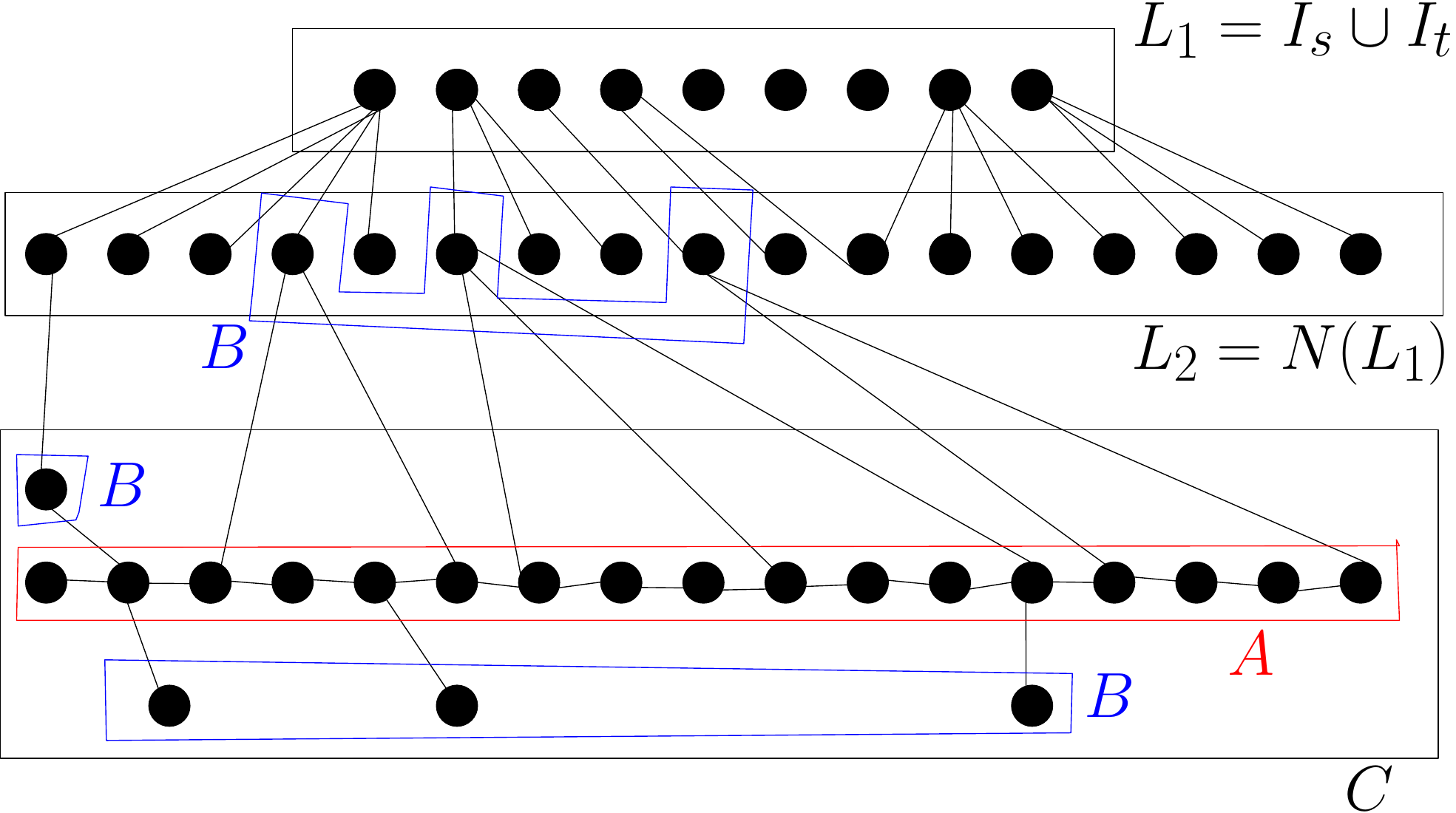}
\caption{An illustration of a diameter-safe component $C$.}
\label{fig:diam_safe}
\end{figure}

\begin{lemma}\label{lem:diam_safe_behavior}
Let $C$ be a diameter-safe component, let $A$ be a diameter path of $C$, and let $B = N_G(V(A))$. If $(G, k, I_s, I_t)$ is a yes-instance, then there exists a reconfiguration sequence from $I_s$ to $I_t$ in $G$ where we have at most one token on vertices of $B$ at all times.
\end{lemma}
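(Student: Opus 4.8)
The plan is to follow the template of the proof of \cref{lem:deg-safe-behavior}, with the diameter path $A = v_0 v_1 \cdots v_d$ of $C$ (so $d = \textsf{diam}(G[V(C)]) > k^3$, and every subpath $v_i \cdots v_j$ of $A$ is a shortest path in $G[V(C)]$) playing the role that the subdivided $k$-star played there. First I would record the structural facts forced by $\textsf{girth}(G) \ge 5$: (i) $A$ is an induced path; (ii) a vertex $b \in B$ that lies in $V(C)$ has exactly one neighbour on $A$, since two neighbours $v_i, v_j$ would give $\textsf{dist}_{G[V(C)]}(v_i, v_j) \le 2$, hence $|i - j| \le 2$, hence a cycle of length $3$ or $4$; and (iii) for an arbitrary $b \in B$, the neighbours of $b$ on $A$ have pairwise index-distance at least $3$, since a closer pair would close a cycle of length at most $4$, so in particular $|N_G(b) \cap V(A)| \le \lfloor d/3 \rfloor + 1$. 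As in the star case, since $V(A) \subseteq V(C) \subseteq L_3$, any vertex of $I_s \cup I_t$ adjacent to $V(A)$ would belong to $L_2$, so $I_s \cap B = I_t \cap B = \emptyset$.

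Given a reconfiguration sequence $\mathcal{S} = \langle I_0, \dots, I_\ell \rangle$ from $I_s$ to $I_t$, I would construct $\mathcal{S}'$ maintaining the invariant that, at every point we mimic, the tokens of $\mathcal{S}$ and $\mathcal{S}'$ outside $A \cup B$ coincide, the counts $|I_i \cap (A \cup B)|$ and $|I'_i \cap (A \cup B)|$ are equal, and in $\mathcal{S}'$ all tokens of $A \cup B$ sit on vertices of $A$ while $B$ is empty (so that the ``at most one token on $B$'' bound is attained only transiently, inside the operations below). Using that any edge with an endpoint in $V(A)$ has its other endpoint in $A \cup B$, a step of $\mathcal{S}$ is of exactly one of four types: a slide with both endpoints outside $A \cup B$, copied verbatim; a slide with both endpoints in $A \cup B$, ignored; a slide of a token from outside onto a vertex $b \in B$, an \emph{absorb}; and a slide of a token from $b \in B$ to the outside, a \emph{project}. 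For an absorb, pick a neighbour $v_i$ of $b$ on $A$, slide the at most $k - 1$ tokens parked on $A$ (within $A$, with $B$ still empty) to an independent placement avoiding $N_G(b) \cap V(A)$ and the at most two neighbours of $v_i$ on $A$, and then slide the incoming token from its (unchanged) source vertex onto $b$ and onto $v_i$; a project is the reverse move. A routine case analysis, using fact~(iii), $\textsf{girth}(G) \ge 5$, the fact that $A$ is induced, and the invariant, shows that each of these moves is legal and re-establishes the invariant; crucially the constraint that a token on $b$ forbids tokens on $N_G(b) \cap V(A)$ becomes active only \emph{after} the parked tokens have been moved off $N_G(b) \cap V(A)$, so the reparking slides inside $A$ are themselves unconstrained.

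The heart of the argument is thus the reparking claim: from any independent placement of $t \le k$ tokens on the path $A$ one can slide the tokens, staying within $A$, to any prescribed independent placement of $t$ tokens on $A$. This is the standard fact that token sliding on a path is connected for a fixed number of tokens --- the tokens cannot cross, but they can always be pushed to a canonical leftmost placement and hence between any two placements --- and a suitable target placement exists: by fact~(iii), removing $N_G(b) \cap V(A)$ and the at most two neighbours of $v_i$ on $A$ deletes at most $\lfloor d/3 \rfloor + 3$ of the $d + 1$ vertices of $A$, so the remainder is a union of subpaths of total independence number larger than $k$, this being where the hypothesis $d > k^3$ is used (the cases $k \le 2$ being immediate). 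I expect this to be the only genuine point of friction: a single $b \in B$ may be adjacent to linearly many vertices of $A$, so once a token rests on $b$ it fragments the reservoir $A$ into many short pieces; the fact that a diameter-safe component contains a path of length more than $k^3$ --- far more than the $2k - 1$ vertices one would need merely to hold $k$ tokens --- is exactly what lets us sidestep this by reparking \emph{before} the token ever reaches $b$.
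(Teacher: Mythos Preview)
Your proposal is correct and follows the same high-level scheme as the paper: mimic a given transformation step by step, keeping the tokens of $A\cup B$ parked on the induced diameter path $A$ with $B$ empty except transiently during an absorb/project, and argue that the needed reparking inside $A$ is always possible because $A$ is long compared to~$k$.

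The only noteworthy difference is in the reparking argument. The paper does a case split on $|N_A(b)|$: if $b$ has more than $k$ neighbours on $A$ it places the at most $k-1$ resident tokens into the first $k-1$ gap intervals and sends the incoming token to a later neighbour of $b$; if $b$ has at most $k$ neighbours it uses an averaging argument ($\textsf{diam}(C)>k^3$ and at most $k+1$ gaps, so some gap has length at least $2k$) to find a single long gap that absorbs all resident tokens away from the endpoints. You instead give a uniform counting argument: deleting $N_G(b)\cap V(A)$ together with $\{v_{i-1},v_{i+1}\}$ removes at most $\lfloor d/3\rfloor+3$ vertices, the remaining union of subpaths has independence number at least $d/3-1>k$, and token sliding on a path is connected, so the desired target placement is reachable. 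Your version is slightly slicker in that it avoids the case split; the paper's version is more explicit about where the tokens actually go. Your observation~(ii), that $b\in B\cap V(C)$ has a \emph{unique} neighbour on $A$, is correct but not actually used downstream, since your observation~(iii) already handles every $b\in B$.
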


\begin{proof}
As in the proof of Lemma~\ref{lem:deg-safe-behavior}, the goal will consist in proving that we can adapt a transformation $\mathcal{S}$ from $I_s$ to $I_t$ into a transformation containing at most one token on a vertex of $B$ at all times and such that, at any step, the number of tokens in $A \cup B$ in both transformations is the same and the positions of the tokens in $V(G) \setminus (A \cup B)$ are the same. As in the proof of Lemma~\ref{lem:deg-safe-behavior}, all the tokens in $A \cup B$ will be absorbed into $A$ (and later projected back as needed) and it suffices to explain how we can move the tokens on $A$ when a new token wants to enter in $B$ or leave into $B$.

We know that two non-consecutive vertices in $A$ cannot be adjacent by minimality of the path. Now assume a token $t$ is about to reach a vertex $b \in B$. Note that neighbors of $b$ in $A$ are pairwise at distance at least three in $A$, since otherwise that would create a cycle of length less than five. We call the intervals between consecutive neighbor of $b$ \emph{gap intervals} (with respect to $b$). 
    
If $b$ has more than $k$ neighbors in $A$, then we can put the already in $A$ tokens (at most $k - 1$ of them) in the at most $k-1$ first gap intervals. Indeed, since there is no token on $B$ and $A$ is an induced path, we can freely move tokens where we want. Then we can slide the token $t$ to $b$, since none of its neighbors in $A$ have a token on them, and then slide it to the next neighbor of $b$ in $A$ since it has more than $k$ neighbors.
    
Otherwise, $b$ has at most $k$ neighbors in $A$. Hence there are at most $k + 1$ gap intervals in $A$ (with respect to $b$). The average number of vertices in the gap intervals (assuming $k\geq 4$) is 
    
$$\alpha = \frac{\textsf{diam}(C) - |N_A(b)|}{|N_A(b)| + 1} \geq \frac{k^3 - k}{k + 1} \geq 2k.$$

Hence at least one gap interval has length at least $\alpha$ and therefore we can slide all tokens currently in $A$ (at most $k-1$ of them) into this gap interval in such a way no token is on the border of the gap interval (since the gap interval contains an independent set of size at least $k-1$ which does not contain an endpoint of the gap interval).
%while keeping all neighbors of $b$ free of tokens and also their neighbors in $A$ free of tokens. 
Now we can simply slide the token $t$ onto $b$ and then onto any of the neighbors of $b$ in $A$.
    
Combined with the fact that the above strategy can also be applied to project a token from $A$ to $B$, it then follows that whenever $(G, k, I_s, I_t)$ is a yes-instance we can find a reconfiguration sequence from $I_s$ to $I_t$ in $G$ where we have at most one token in $B$ at all times, as claimed (see~\cref{fig:diam_safe}). 
\qed
\end{proof}

\begin{corollary}\label{cor:diam_safe}
Let $C$ be a diameter-safe component. If $(G, k, I_s, I_t)$ is a yes-instance then there exists a reconfiguration sequence from $I_s$ to $I_t$ where we have at most one token in $N(C) \subseteq L_2$ at all times.
\end{corollary}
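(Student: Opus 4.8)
The plan is to obtain \cref{cor:diam_safe} from \cref{lem:diam_safe_behavior} in exactly the same way that \cref{cor:deg_safe} is obtained from \cref{lem:deg-safe-behavior}: the lemma already controls the neighbourhood $B = N_G(V(A))$ of a fixed diameter path $A$ of $C$, and the corollary only needs to upgrade this to control of the possibly larger set $N(C)$. First I would fix a diameter path $A$ of $C$; this exists since $C$ is diameter-safe, so $\textsf{diam}(G[V(C)]) > k^3 \geq 3$ and $C$ in particular contains a shortest path of length more than $k^3$. Put $B = N_G(V(A))$. Because $A \subseteq V(C) \subseteq L_3$ and vertices of $L_3$ have no neighbour in $L_1$, we get $B \subseteq L_2 \cup L_3$, hence $I_s \cap B = I_t \cap B = \emptyset$; likewise $N(C) \subseteq L_2$ by maximality of $C$ as a component of $G[L_3]$, so neither $I_s$ nor $I_t$ has a token on $N(C)$.

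Given a reconfiguration sequence $\mathcal{S}$ from $I_s$ to $I_t$, I would maintain, as in the proofs of \cref{lem:deg-safe-behavior,lem:diam_safe_behavior}, the invariants that at every step the number of tokens on $V(A) \cup B$ equals the number of tokens $\mathcal{S}$ places on $V(C) \cup N(C)$, that the positions of all tokens outside $V(C) \cup N(C)$ agree with those of $\mathcal{S}$, and that at most one token lies on $N(C)$ at any time. Steps of $\mathcal{S}$ internal to $V(C) \cup N(C)$ are skipped and steps disjoint from $V(C) \cup N(C)$ are copied verbatim, so the only thing to describe is a step of $\mathcal{S}$ that slides a token onto (or off of) some $c \in N(C)$. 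If $c \in B$, this is \cref{lem:diam_safe_behavior}. If $c \notin B$, then $c$ has a neighbour $c' \in V(C)$, and since $C$ is connected there is a path $P$ in $C$ from $c'$ to a vertex of $A$; I route the incoming token from $c$ along $P$ until it first meets $B$, at which point \cref{lem:diam_safe_behavior} applies and the token is absorbed into $A$, and for an outgoing token I use the reversibility of the slides in \cref{lem:diam_safe_behavior} to project a token from $A$ to $B$ and slide it back along such a path to $c$. The three invariants are preserved throughout, so the resulting sequence reconfigures $I_s$ into $I_t$ with at most one token on $N(C)$ at all times, as required.

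The point that needs the most care is the legality of sliding a token along the routing path $P$, i.e.\ that $N[P]$ can be made token-free. This is exactly where the invariants are used: just before a token is routed into $C$, the only token on $V(C) \cup N(C)$ outside $A$ is the one being routed (by the ``at most one on $N(C)$'' invariant, together with the convention that every token of $V(C) \cup N(C)$ is absorbed into $A$), and the tokens currently on $A$ are under our control via \cref{lem:diam_safe_behavior}. Hence I first slide those tokens off $N[P] \cap V(A)$ inside the induced path $A$ — there is room, since $A$ is an induced path on more than $k^3$ vertices and so accommodates fewer than $k$ tokens while keeping a prescribed bounded-length sub-path empty — and only then route the new token along $P$ to $B$. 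With this ``immediate absorption on first contact with $N(C)$'' convention in place, everything else is the same bookkeeping as in \cref{cor:deg_safe}, and I do not expect any further obstacle.
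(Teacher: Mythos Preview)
Your proposal is correct and follows essentially the same approach as the paper: route a token arriving at $c\in N(C)\setminus B$ along a path in $C$ until it first meets $B$, then invoke \cref{lem:diam_safe_behavior}, with the ``immediate absorption'' convention guaranteeing the routing path is unobstructed. Your write-up is in fact more careful than the paper's three-line proof --- you spell out the invariants and the legality of sliding along $P$ --- but the underlying argument is the same.
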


\begin{proof}
We follow the same strategy as for the degree-safe components. When a token reaches a vertex in $N(C)$ (for the first time), if it belongs to $B$ the result follows from~\cref{lem:diam_safe_behavior}. Otherwise we can move along a path in $C$ to the closest vertex of the diameter path to reach $B$ and then the result again follows from~\cref{lem:diam_safe_behavior}.
\qed
\end{proof}

Putting~\cref{cor:deg_safe} and~\cref{cor:diam_safe} together, we know that if $(G, k, I_s, I_t)$ is a yes-instance, then there exists a reconfiguration sequence from $I_s$ to $I_t$ where we have at most one token in $N(C) \subseteq L_2$ at all times, where $C$ is either a degree-safe or a diameter-safe component. We now show how to reduce a safe component $C$ by replacing it by another smaller subgraph that we denote by $H$. 

\begin{figure}
\centering
\includegraphics[scale=0.4]{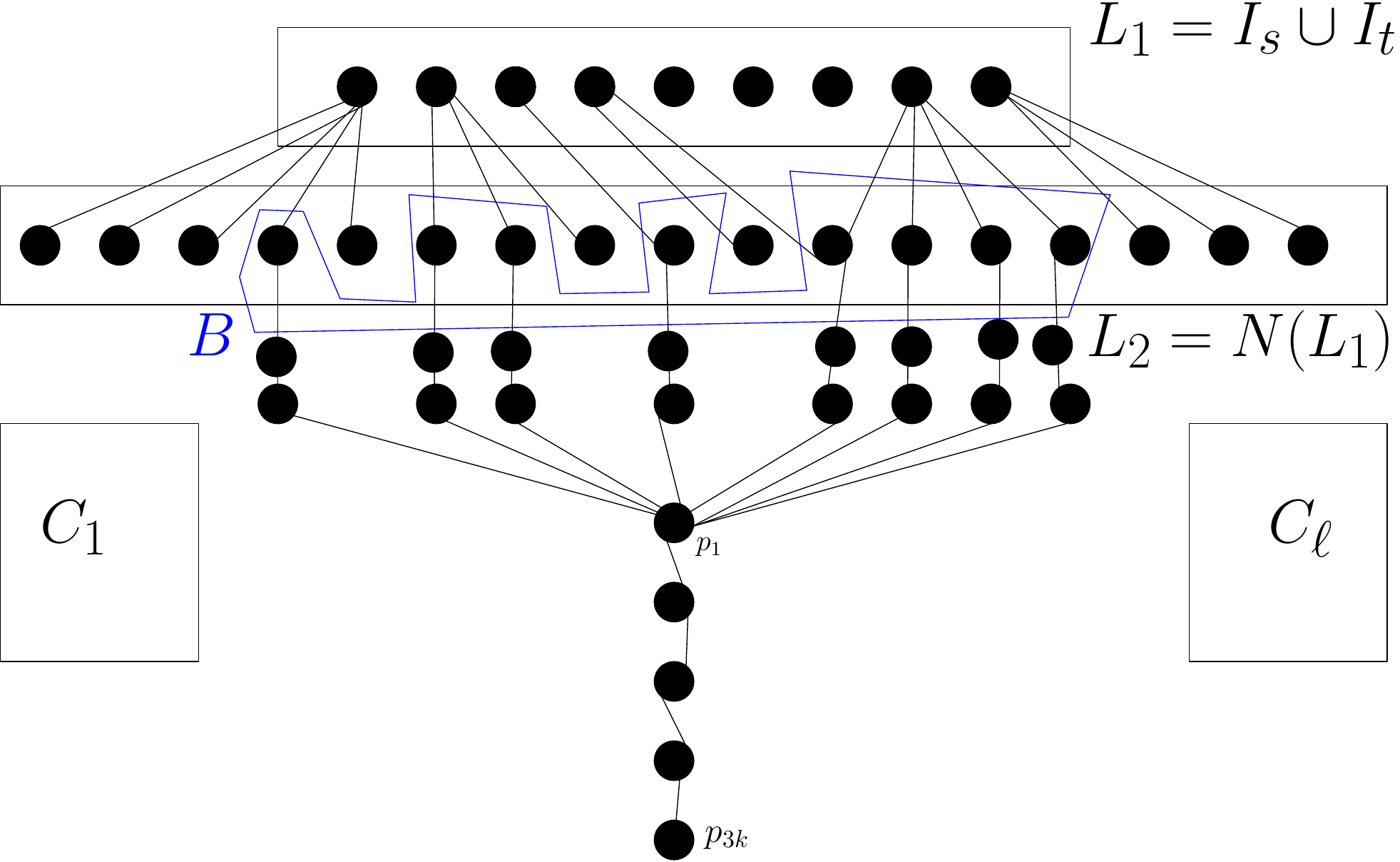}
\caption{An illustration of the replacement gadget for a safe component $C$.}
\label{fig:replace}
\end{figure}

\begin{lemma}\label{bounding_safe_components}
Let $C$ be a safe component in $G[L_3]$ and let $G'$ be the graph obtained from $G$ as follows:
\begin{itemize}
    \item Delete all vertices of $C$ (and their incident edges).
    \item For each vertex $v \in N(C) \subseteq L_2$ add two new vertices $v'$ and $v''$ and add the edges $\{v, v'\}$ and $\{v', v''\}$. 
    \item Add a path of length $3k$ consisting of new vertices $p_1$ to $p_{3k}$. 
    \item Add an edge $\{p_1, v''\}$ for every vertex $v''$.
\end{itemize}
Note that this new component has size $3k + |2N(C)|$ (see~\cref{fig:replace}). We claim that 
$(G, k, I_s, I_t)$ is a yes-instance if and only if $(G', k, I_s, I_t)$ is a yes-instance. 
\end{lemma}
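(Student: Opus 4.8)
The plan is to show that the replacement gadget $H$ (the vertices $v',v''$ for $v\in N(C)$ together with the path $p_1,\dots,p_{3k}$) behaves, with respect to token sliding, exactly like a safe component: it contains an independent set of size $k$, it can \emph{absorb} a token arriving at any $v\in N(C)$, it can \emph{project} a token back onto any such $v$, and all of this can be done while keeping at most one token on $N(C)$ at a time. Given this ``black box'' behaviour, both implications are proved by a step-by-step simulation. For the forward direction, start from a reconfiguration sequence $\mathcal{S}$ from $I_s$ to $I_t$ in $G$ which, by \cref{cor:deg_safe} and \cref{cor:diam_safe}, keeps at most one token on $N(C)$ at all times and in which $C$ acts as a black box: a token is absorbed into $C$ as soon as it reaches $N(C)$ and projected back only when needed, so that outside those moments the tokens sitting inside $C$ are ``parked'' and can be assumed not to block whichever vertex of $N(C)$ the sequence next uses. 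Build $\mathcal{S}'$ in $G'$ by copying verbatim every step disjoint from $V(C)\cup N(C)$, ignoring every step internal to $V(C)$, and replacing each step ``a token enters $C$ through $v\in N(C)$'' (resp.\ ``leaves $C$ onto $v$'') by an absorb (resp.\ project) operation of $H$. Maintain the invariant that the two configurations agree on $V(G)\setminus(V(C)\cup N(C))$, agree on $N(C)$, and carry the same number of tokens inside $V(C)$ and inside $V(H)$, with the gadget's tokens in a fixed parked position on $p_1,\dots,p_{3k}$. Since $I_s,I_t$ place no token on $N(C)\subseteq L_2$ nor inside $V(C)\subseteq L_3$, the invariant holds at both ends, so $\mathcal{S}'$ certifies that $(G',k,I_s,I_t)$ is a yes-instance.

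For the converse, first observe that $H$ itself satisfies the black-box property, by a (simpler) version of the argument in the proof of \cref{lem:deg-safe-behavior}: the path $p_1,\dots,p_{3k}$ has $3k$ vertices, hence contains an independent set of size $k$ avoiding $p_1$ (e.g.\ a set consolidated near its far end), tokens on a path can always be shifted freely, and each $v\in N(C)$ has the unique private entrance $v-v'-v''-p_1$; so any reconfiguration sequence in $G'$ can be rewritten to keep at most one token on $N(C)$ at a time with $H$ behaving as a black box. With such a sequence in hand, run the simulation in the other direction: copy steps disjoint from $V(H)\cup N(C)$, ignore steps internal to $V(H)$, and replace ``enters/leaves $H$ through $v$'' steps by absorb/project operations of the safe component $C$ — which $C$ supports because, being safe, it contains an independent set of size $k$ (\cref{subdivided_star_degree_safe_components} in the degree-safe case, the middle of a diameter path in the diameter-safe case) and supports the absorb/project operations of \cref{lem:deg-safe-behavior,lem:diam_safe_behavior}. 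The invariant again holds at $I_s$ and $I_t$, giving that $(G,k,I_s,I_t)$ is a yes-instance. Along the way one also checks that $G'$ has girth at least five: the only new vertices are the $v',v''$ and the $p_j$ and the only new edges are $\{v,v'\},\{v',v''\},\{v'',p_1\}$ and the path edges, and any would-be cycle of length at most four through a new vertex is forced to traverse the degree-two vertices $v',v''$ linearly while $p_1$'s neighbours $v_1'',v_2''$ have no second common neighbour, so no short cycle is created; and one verifies that the absorb and project operations preserve the parked-configuration invariant (with at most $k-1$ tokens already on the path, slide the incoming token $v\to v'\to v''\to p_1$ and push it along to re-consolidate at most $k$ tokens near the far end, leaving $p_1,p_2,p_3$ free; symmetrically for projection), the analogous statements on the $G$-side being exactly what the proofs of \cref{lem:deg-safe-behavior} and \cref{lem:diam_safe_behavior} establish.

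The step I expect to be the main obstacle is making the interface between ``inside the structure'' and ``outside'' watertight — in particular guaranteeing that a vertex $v\in N(C)$ is available to receive an external token in $G$ if and only if the corresponding vertex is available in $G'$, and more generally phrasing a single invariant that survives every type of reconfiguration step. On the gadget side availability is immediate, since parked tokens occupy path vertices and therefore never sit on the gate vertex $v'$. On the safe-component side, parked tokens could a priori be adjacent to $v$; here one exploits that a safe component has abundant room — a diameter-safe $C$ has a diameter path so long that even after deleting the closed neighbourhood of any single $v\in N(C)$ (whose neighbours on the path are pairwise at distance at least three by the girth-five condition) more than $k$ independent vertices remain to park on, and a degree-safe $C$ has at least $k^2$ second-level star vertices of which $v$ dominates at most one (else a $C_4$), again leaving an independent set of size $k$ — so the parked tokens can always be rearranged to avoid blocking whichever $v\in N(C)$ the current step uses. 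Interleaving these rearrangements cleanly with the absorb/project operations, so that the invariant is restored after every step, is the delicate part of the proof.
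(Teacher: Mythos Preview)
Your proposal is correct and follows the same approach as the paper: both directions are proved by a step-by-step simulation that treats the safe component $C$ and the gadget $H$ as interchangeable black boxes with the absorb/project property, invoking \cref{cor:deg_safe} and \cref{cor:diam_safe} for the forward direction and the analogous (easier) property of $H$ for the converse. Your write-up is in fact considerably more explicit than the paper's own proof, which merely asserts that one can ``mimic'' the sequence in either direction; in particular, your invariant (agreement outside $V(C)\cup N(C)$, same count inside) and your discussion of why the gate vertex $v\in N(C)$ is always available on both sides fill in details the paper leaves implicit.
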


\begin{proof}
First, we note that replacing $C$ with this new component, $H$, cannot create cycles of length less than five. This follows from the fact that all the vertices at distance one or two from $p_1$ have distinct neighbors.
    
Assume $(G, k, I_s, I_t)$ is a yes-instance. Then, by~\cref{cor:deg_safe} and~\cref{cor:diam_safe}, we know that there exists a reconfiguration sequence from $I_s$ to $I_t$ in $G$ where we have at most one token in $N(C) \subseteq L_2$ at all times, where $C$ is either a degree-safe or a diameter-safe component. Hence, we can mimic the reconfiguration sequence from $I_s$ to $I_t$ in $G'$ by simply projecting tokens onto the path of length $3k$ in each of the safe components that we replaced. 

Now assume that $(G', k, I_s, I_t)$ is a yes-instance. By the same arguments, and combined with the fact that a safe component $C$ can absorb/project the same number of tokens as its replacement component $H$, we can again mimic the reconfiguration sequence of $G'$ in $G$. 
\qed
\end{proof}

\subsection{Bounding the size of $L_2$}
Having classified the components in $L_3$ and the edges between $L_2$ and $L_3$, our next goal is to bound the size of $L_2$, which until now could be arbitrarily large. 
We know that vertices in $L_2$ are the neighbors of vertices in $L_1$, hence the size of $L_2$ will grow whenever there are vertices in $L_1$ with arbitrarily large degrees. Bounding $L_2$ will therefore be done by first proving the following lemma.

\begin{lemma}\label{bounding_l1_degrees}
Assume a vertex $u$ in $L_1 = I_s \cup I_t$ has degree greater than $2k^2$. Moreover, assume, without loss of generality, that $u \in I_s$. Then, there exists $I_s'$ such that $I_s \triangle I_s' = \{u,u'\}$, $u'$ has degree at most $2k^2$, and the token on $u$ can slide to $u'$. 
\end{lemma}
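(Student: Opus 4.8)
The plan is to show that if $u \in I_s$ has degree greater than $2k^2$, then $u$ has a neighbor $u'$ of degree at most $2k^2$ such that sliding the token from $u$ to $u'$ yields a valid independent set $I_s'$ (i.e., $u'$ has no token neighbor other than $u$, and $u'$ itself carries no token). First I would look at $N_G(u)$, which has more than $2k^2$ vertices. Since $G$ has girth at least five, $N_G(u)$ is an independent set (a single edge inside $N_G(u)$ would form a triangle), so independence of the target set is not an issue among the neighbors themselves; the only obstruction to sliding into a vertex $w \in N_G(u)$ is that $w$ already holds a token of $I_s$, or $w$ has a neighbor (other than $u$) holding a token of $I_s$.

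The key counting step is to bound how many neighbors of $u$ are ``blocked'' in this sense. There are at most $k-1$ tokens of $I_s$ other than the one on $u$. No neighbor of $u$ can itself hold a token (that would violate independence of $I_s$), so blocking must come from a token sitting on a vertex at distance two from $u$ via some $w \in N_G(u)$. Crucially, by the girth-five condition, any vertex $z \neq u$ has at most one neighbor in $N_G(u)$ — otherwise $z$ together with two neighbors in $N_G(u)$ and $u$ would form a $C_4$. Hence each of the at most $k-1$ remaining tokens blocks at most one vertex of $N_G(u)$. So at most $k-1 < 2k^2$ neighbors of $u$ are blocked, leaving at least $2k^2 + 1 - (k-1) > 0$ candidate neighbors into which the token can legally slide.

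It remains to ensure we can pick such a candidate $u'$ that additionally has degree at most $2k^2$. Here I would count high-degree neighbors of $u$: since $G$ has girth at least five, two distinct neighbors $w_1, w_2$ of $u$ have no common neighbor other than $u$ (a common neighbor would again create a $C_4$), so their neighborhoods $N_G(w_i) \setminus \{u\}$ are pairwise disjoint. If $u$ had more than $2k$ neighbors of degree greater than $2k^2$, we could still — wait, degree alone need not be globally bounded — instead the right observation is simpler: among the at least $2k^2 + 1$ unblocked neighbors, I want one of degree at most $2k^2$. Suppose for contradiction every unblocked neighbor has degree $> 2k^2$; but that is not immediately contradictory, so instead I would argue that the number of neighbors of $u$ of degree exceeding $2k^2$ is itself small by a different route — or, more cleanly, apply the previous lemmas (the twin-reduction and \cref{lem:bound_l1_degrees}-style bookkeeping) — actually the cleanest fix is: the statement only needs \emph{some} valid slide target of bounded degree, and iterating this lemma will eventually terminate, so even a target of degree $> 2k^2$ that is still strictly smaller would suffice for a potential argument; I expect the paper instead shows directly that unblocked neighbors of bounded degree exist.

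The main obstacle I anticipate is precisely this last point: combining the ``unblocked'' condition with the ``bounded degree'' condition simultaneously. The unblocked count gives $\geq 2k^2 + 1 - (k-1)$ candidates, and one must show not all of them have large degree. I would handle this by a second girth-based disjointness argument: if $u$ had $t$ neighbors each of degree $> 2k^2$, their private neighborhoods (disjoint by girth five) would be large, but this does not by itself bound $t$ — so the real argument must be that we are allowed to \emph{further} slide the token along a path into the component structure, and the degree bound $2k^2$ is chosen so that a vertex of $N_G(u)$ of low degree is guaranteed by a pigeonhole on the at most $2k$ tokens' reach; I expect the proof to fix $u'$ to be any unblocked neighbor and then observe that if $u'$ has high degree we repeat the lemma, with a potential/measure argument (e.g. the token moves strictly away or the high-degree set shrinks) guaranteeing termination. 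Getting that termination or the direct pigeonhole right is where the care is needed.
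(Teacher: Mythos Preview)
Your first step is exactly the paper's first step: since any vertex $z\neq u$ has at most one neighbor in $N_G(u)$ by the girth condition, the $k-1$ other tokens block at most $k-1$ neighbors of $u$, so there is an unblocked $v\in N_G(u)\subseteq L_2$ to slide to. That part is fine.

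The gap you flag is real, and your proposed patches (a direct pigeonhole, or ``repeat the lemma'' on the new vertex) do not work. Nothing prevents every neighbor of $u$ from having degree $>2k^2$: with girth five the sets $N_G(w)\setminus\{u\}$ for $w\in N_G(u)$ are pairwise disjoint, but disjointness gives no upper bound on their sizes. Nor can you re-invoke the lemma as stated once you slide, because the token is now in $L_2$, not $L_1$.

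The paper's resolution is to allow the token to slide \emph{further}, not to insist that $u'$ be a neighbor of $u$. If the unblocked neighbor $v$ has degree $>2k^2$, then by \cref{lem:bound_l2_l3_degree} at most $2k$ of its neighbors lie in $L_1\cup L_2$, so $v$ has a neighbor $w$ in some component $C$ of $L_3$. Crucially, $L_3=V(G)\setminus N_G[L_1]$ contains no tokens and no neighbors of tokens, so once the moving token enters $C$ it can slide freely along any path inside $C$. The component classification then supplies a low-degree destination: in a bounded component every vertex has degree at most $k^2+2k$; in a bad component the high-degree witness $b$ has a neighbor $z$ whose only neighbor in $C$ is $b$, so $\deg(z)\le 2k+1$; in a (replaced) safe component there are many vertices of degree two. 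In every case you reach some $u'$ with $\deg(u')\le 2k^2$, and since only the one token moved, $I_s\triangle I_s'=\{u,u'\}$.

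So the missing idea is not a sharper count on $N_G(u)$, but the observation that $L_3$ is token-free and neighbor-of-token-free, which turns the problem into locating a low-degree vertex in the component structure already set up in \cref{l3_component_types}.
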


\begin{proof}
First note that from such a vertex $u \in I_s$ we can always slide to a vertex in $L_2$.
Indeed, for every $v$, $|N(u) \cap N(v)| \le 1$ by the assumption on the girth of the graph. Thus, since the degree of $u$ is larger than the number of tokens, there exists at least one vertex in $L_2$ that the token on $u$ can slide to. 
%since other tokens can block at most one neighbor of $u$ or otherwise that would imply a cycle of length four.
%Since we have only $k$ tokens there exists at least one vertex $v \in L_2$ that the token on $u$ can slide to. 
    
If we slide to a vertex $v \in L_2$ of degree at most $2k^2$, then we are done (we set~$u' = v$). 
Otherwise, by~\cref{lem:bound_l2_l3_degree}, we know that most of the neighbors of $v$ are in~$L_3$; since $v$ has degree greater than $2k^2$ and at most $2k$ of its neighbors are in $L_1 \cup L_2$. Hence, we are guaranteed at least one neighbor $w$ of $v$ in some component of $L_3$.
    
If we reach a bounded component $C$, i.e., if $w$ belongs to a bounded component, then all vertices of $C$ (including $w$) have at most $k^2$ neighbors in $C$ and have at most $2k$ neighbors in $L_2$ (by~\cref{lem:bound_l2_l3_degree}) and thus we can set $u' = w$. 

If we reach a bad component $C$, then we know that $C$ has a vertex $b$ with at least $k^2 + 1$ neighbors in $C$ and at most $k^2 - 1$ of those neighbors have other neighbors in $C$. 
Let $z$ denote a vertex in the neighborhood of $b$ that does not have other neighbors in $C$. By~\cref{lem:bound_l2_l3_degree}, $z$ will have degree at most $2k + 1$ and we can therefore let $u' = z$.

Finally, if we reach a safe component, then after our replacement such components contain a lot of vertices of degree exactly two and we can therefore slide to any such vertex, which completes the proof.
\qed
\end{proof}

After exhaustively applying~\cref{bounding_l1_degrees}, each time relabeling vertices in $L_1$, $L_2$ and $L_3$ and replacing safe components as described in~\cref{bounding_safe_components}, we get an equivalent instance where the maximum degree in $L_1$ is at most $2k^2$ and hence we get a bound on the size of $L_2$. We conclude this section with the following lemma.

\begin{lemma}\label{l2_bound}
Let $(G, k, I_s, I_t)$ be an instance of {\sc Token Sliding}, where $G$ has girth at least five. Then we can compute an equivalent instance $(G', k, I_s', I_t')$, where $G'$ has girth at least five, $|L_1 \cup L_2| \leq 2k + 4k^3 = O(k^3)$, and each safe component of $G$ is replaced in $G'$ by a component with at most $3k + 8k^3 = O(k^3)$ vertices. 
\end{lemma}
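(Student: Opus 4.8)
The plan is to chain the reductions of this section into one normalization procedure and read off the sizes. I would first apply \cref{lem:removal_of_twin_nodes} exhaustively to delete every twin outside $I_s\cup I_t$; each application removes a vertex and leaves an induced subgraph, so it runs in polynomial time and preserves the girth and the answer. From here on I always work with the current graph and recompute $L_1=I_s\cup I_t$, $L_2=N_G(L_1)$, $L_3=V(G)\setminus(L_1\cup L_2)$ after every modification.

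The core is the following loop: \emph{while} some $u\in L_1$ has $\deg_G(u)>2k^2$, first replace every safe component of $G[L_3]$ by the gadget of \cref{bounding_safe_components} (this is exactly what lets the final case in the proof of \cref{bounding_l1_degrees} go through, since that gadget supplies degree-two vertices to land on), and then apply \cref{bounding_l1_degrees} to slide the token on $u$ to a vertex $u'$ with $\deg_G(u')\le 2k^2$, updating $I_s$ or $I_t$ (and applying the lemma to both tokens if $u\in I_s\cap I_t$). Each step preserves the girth (twin removal and the replacement of \cref{bounding_safe_components} both do) and the yes/no answer (by \cref{bounding_l1_degrees}, \cref{bounding_safe_components}, \cref{cor:deg_safe}, and \cref{cor:diam_safe}). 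For termination I use the potential $\Phi=|\{v\in L_1:\deg_G(v)>2k^2\}|$. The set $L_1$ depends only on the token positions, and replacing a safe component $C\subseteq L_3$ changes neither $L_1$ nor any edge incident to $L_1$ (no vertex of $L_1$ has a neighbour in $L_3$), so the replacement step leaves $\Phi$ unchanged, while the slide step deletes the high-degree vertex $u$ from $L_1$ and inserts a vertex of degree at most $2k^2$, strictly decreasing $\Phi$. Since $\Phi\le|L_1|\le 2k$ throughout, the loop performs at most $2k$ slides and at most $2k+1$ rounds of safe-component replacement, all in polynomial time.

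When the loop stops, in the resulting graph $G'$ every vertex of $L_1=I_s'\cup I_t'$ has degree at most $2k^2$, and $|L_1|\le 2k$; hence $|L_2|=|N_{G'}(L_1)|\le 2k\cdot 2k^2=4k^3$ and $|L_1\cup L_2|\le 2k+4k^3$. For the safe components, note that every component $C$ of $G[L_3]$ has $N(C)\subseteq L_2$ (no vertex of $L_1$ is adjacent to $L_3$, and $C$ is a maximal component of $G[L_3]$), and this stays true after $C$ is replaced by its gadget $H$, because that replacement only removes edges inside $C\cup N(C)$ and only adds edges among new vertices or from new vertices to $N(C)$, leaving $L_1$ and $L_2$ intact. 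By construction the gadget has $|V(H)|=3k+2\,|N(C)|$ vertices and boundary $N(C)\subseteq L_2$, so once the loop has established $|L_2|\le 4k^3$ we get $|V(H)|\le 3k+8k^3$ for every replaced safe component of $G$, as claimed.

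The step I expect to be the main obstacle is making the termination and well-definedness of the loop fully rigorous against the dynamically changing partition: a single token slide can merge or split components of $G[L_3]$ and spawn brand-new safe components, and, because a path-gadget can itself become degree-safe when its attachment set has size at least $k^2$, one must prevent the process from replacing a gadget by an isomorphic copy of itself forever. I would handle this by replacing each \emph{original} safe component at most once and marking its gadget so it is never treated as a fresh safe component again (the gadget already has $O(k^3)$ vertices and all the degree-two vertices needed by \cref{bounding_l1_degrees}), combined with the observation that the boundary of a gadget always stays inside $L_2$; the numerical bounds themselves are exactly those above and are insensitive to the precise bookkeeping.
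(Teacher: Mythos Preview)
Your argument is exactly the paper's approach (the paper leaves the entire proof to the short paragraph preceding the lemma), and your potential $\Phi$ makes termination explicit where the paper does not; the observation that replacing a safe component touches no edge incident to $L_1$, hence leaves $\Phi$ unchanged, is the key point and you state it correctly.

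The only place that needs adjustment is your last paragraph. The claim that a marked gadget ``already has $O(k^3)$ vertices'' is not justified at the moment it is created \emph{during} the loop: the gadget has $3k+2|N(C)|$ vertices, and $|N(C)|\le|L_2|$ is only bounded by $4k^3$ \emph{after} the loop finishes. Likewise, ``the boundary of a gadget always stays inside $L_2$'' is false in general: if the only $L_1$-neighbour of some $v\in N(C)$ was the high-degree token you just slid away, then $v$ drops into $L_3$ and merges with the gadget. The clean fix---which is what the paper's algorithm in Section~4.1 encodes via the extra loop condition ``or there exists an unbounded safe component in $L_3$''---is to perform one more round of safe-component replacements after the slides have stopped. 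At that point $|L_2|\le 4k^3$, so every current safe component (including any oversized gadget from an earlier round, which, as you correctly observe, is itself degree-safe whenever $|N(C)|\ge k^2$) is replaced by a gadget of size at most $3k+8k^3$; since $L_1$ and $L_2$ no longer change, this final round stabilises.
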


\subsection{Bounding the size of $L_3$}

We have proved that the number of vertices in $L_1$ and $L_2$ is bounded by a function of $k$, namely $|L_1 \cup L_2| = O(k^3)$. We have also shown that every safe or bounded component in $L_3$ has a bounded number of vertices, namely safe components have $O(k^3)$ vertices and bounded components have at most $k^{2k^3}$ vertices. We still need to show that $L_3$ is bounded. We start by showing that bad components become bounded after bounding $L_2$:

\begin{lemma}\label{lem:bound_bad}
Let $(G, k, I_s, I_t)$ be an instance where $G$ has girth at least five, $|L_1 \cup L_2| \leq 2k + 4k^3 = O(k^3)$, and each safe component has at most $3k + 8k^3 = O(k^3)$ vertices. Then, every bad component in that instance has at most $k^{O(k^3)}$ vertices.
\end{lemma}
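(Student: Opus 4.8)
The plan is to show that a bad component $C$ has bounded diameter, that deleting its degree‑one vertices leaves a graph of bounded maximum degree, and that the deleted degree‑one vertices are few because twins have been removed. Throughout I use that $G$ has girth at least five, that twins outside $I_s\cup I_t$ have been removed (\cref{lem:removal_of_twin_nodes}), and the hypothesis $|L_1\cup L_2|=O(k^3)$.

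First, since $C$ is bad it is in particular not diameter‑safe, so $\textsf{diam}(G[V(C)])\le k^3$. The crucial structural point is that, since $C$ is not degree‑safe, \emph{every} vertex $v\in V(C)$ has at most $k^2$ neighbours in $C$ that have a further neighbour in $C$ (call these the non‑leaf neighbours of $v$): if $|N_C(v)|\le k^2$ this is trivial, and if $|N_C(v)|\ge k^2+1$ then having $k^2$ or more non‑leaf neighbours would make $v$ a witness of degree‑safeness, a contradiction. Now let $\Lambda$ be the set of degree‑one vertices of $G[V(C)]$ and put $C^{\circ}=G[V(C)\setminus\Lambda]$; we may assume $|V(C)|\ge 3$, the smaller cases being immediate. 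Since every internal vertex of a path has degree at least two, a shortest path of $G[V(C)]$ between two vertices of $C^{\circ}$ stays inside $C^{\circ}$; hence $C^{\circ}$ is connected with $\textsf{diam}(C^{\circ})\le k^3$, and the degree of a vertex of $C^{\circ}$ equals its number of non‑leaf neighbours in $C$, which is at most $k^2$. A breadth‑first search from any vertex of $C^{\circ}$ then has at most $k^3+1$ levels and branching at most $k^2$, so $|V(C^{\circ})|\le (k^2)^{k^3+1}=k^{O(k^3)}$.

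It remains to bound $|\Lambda|$. Each leaf has its unique neighbour in $C^{\circ}$ (a leaf whose neighbour were also a leaf would force $C=K_2$), so it suffices to bound the number of leaves adjacent to a fixed $v\in V(C^{\circ})$. Such a leaf $\ell$ lies in $L_3$ and hence has no neighbour in $L_1$, so by \cref{lem:bound_l2_l3_degree} its neighbourhood is $\{v\}\cup S_\ell$ with $S_\ell\subseteq L_2$ and $|S_\ell|\le 2k$. Two leaves adjacent to $v$ with $S_\ell=S_{\ell'}$ would be twins and hence would have been removed, so the number of leaves at $v$ is at most the number of subsets of $L_2$ of size at most $2k$, which is at most $(1+|L_2|)^{2k+1}=k^{O(k)}$ since $|L_2|=O(k^3)$. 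Thus $|\Lambda|\le |V(C^{\circ})|\cdot k^{O(k)}\le k^{O(k^3)}$, and therefore $|V(C)|=|V(C^{\circ})|+|\Lambda|=k^{O(k^3)}$.

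The step I expect to be the main obstacle is the structural observation that ``not degree‑safe'' yields a uniform bound on the number of non‑leaf neighbours of every vertex — this is exactly what turns the diameter bound into a size bound, since it lets us peel off the (possibly many) pendant leaves and reduce to a bounded‑degree graph of bounded diameter. The second point that needs care is that the leaf count genuinely relies on both twin‑removal and the bound $|L_1\cup L_2|=O(k^3)$: without them a single vertex could carry unboundedly many pendant leaves and $C$ would not be bounded at all.
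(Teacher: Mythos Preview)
Your argument is correct, and it reaches the same bound as the paper, but by a genuinely different route. The paper does not peel off leaves; instead it bounds the degree of \emph{every} vertex of $C$ directly. Given $v\in V(C)$ with $d=|N_C(v)|>k^2$, the negation of degree-safeness forces at least $d-k^2+1$ of its neighbours to be leaves of $G[V(C)]$; twin removal then leaves at most one such leaf with empty $L_2$-neighbourhood, and the girth-five hypothesis (no $C_4$) forces the $L_2$-neighbourhoods of the remaining leaves to be pairwise \emph{disjoint}. Hence $d-k^2\le |L_2|=O(k^3)$, so $d=O(k^3)$, and a single BFS on $C$ itself gives $|V(C)|\le (O(k^3))^{k^3}=k^{O(k^3)}$.

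The trade-off is this: the paper's approach is shorter and yields an explicit degree bound on all of $C$, but it exploits the $C_4$-freeness a second time (to get disjointness rather than mere distinctness of the $S_\ell$). Your approach separates the bounded-degree ``core'' $C^{\circ}$ from the pendant leaves and bounds the leaves using only twin removal plus the size bound on $L_2$; in particular your leaf count of $k^{O(k)}$ per vertex would survive even without the $C_4$-free step, at the cost of a slightly looser intermediate constant that is absorbed into the $k^{O(k^3)}$ anyway. Both arguments hinge on the same structural observation you singled out: the failure of degree-safeness uniformly caps the number of non-leaf neighbours of every vertex.
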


\begin{proof}
Let $C$ be a bad component, hence $\textsf{diam}(C) \leq k^3$ since $C$ is not diameter-safe. Let $v \in V(C)$ be a vertex in $C$ whose degree is $d > k^2$. Since $C$ is not a degree-safe component $v$ can have at most $k^2 - 1$ neighbors in $C$ that have other neighbors in $C$. Hence, at least $d - (k^2 - 1) = d - k^2 + 1$ neighbors of $v$ will have only $v$ as a neighbor in $C$ and all their other neighbors must be in $L_2$. Since, by~\cref{lem:removal_of_twin_nodes}, we can assume that $L_3$ contains no twin vertices, $d - k^2$ of the neighbors of $v$ in $C$ must have at least one neighbor in $L_2$. 
But we know that~$L_2$ has size $O(k^3)$ and if two neighbors of $v$ had a common neighbor in $L_2$, this would imply the existence of a cycle of length four. Therefore, $d$ must be at most~$O(k^3)$. Having bounded the degree and diameter of bad components, we can now apply the same argument as in the proof of~\cref{bounded_components_size}.
\qed
\end{proof}

Since bounded and bad components now have the same asymptotic number of vertices, in what follows we refer to both of them as bounded components. What remains to show is that the number of safe and bounded components is also bounded by a function of $k$ and hence $L_3$ and the whole graph will have size bounded by a function of $k$.

\begin{definition}\label{equivalent_components}
Let $C_1$ and $C_2$ be two components in $G[L_3]$ and $B_1$ and $B_2$ be their respective neighborhoods in $L_2$. We say $C_1$ and $C_2$ are equivalent whenever $B_1 = B_2 = B$ and $G[V(C_1) \cup B]$ is isomorphic to $G[V(C_2) \cup B]$ by an isomorphism that fixes $B$ point-wise. We let $\beta(G)$ denote the number of equivalence classes of bounded components  and we let $\sigma(G)$ denote the number of equivalence classes of safe components.
\end{definition}

We are now ready to prove a crucial result for bounding $L_3$.

\begin{lemma}\label{redundant_component_removal}
Let $S_1$ and $S_2$ be equivalent safe components and let $B_1$, $\ldots$, $B_{k + 1}$ be equivalent bounded components. Then, $(G, k, I_s, I_t)$, $(G - V(S_2), k, I_s, I_t)$ and $(G - V(B_{k + 1}), k, I_s, I_t)$ are equivalent instances.
\end{lemma}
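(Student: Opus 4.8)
The statement has two essentially independent parts: redundancy of equivalent safe components, and redundancy of equivalent bounded components (once there are at least $k+1$ of them). I will treat them separately, and in each case the strategy is the standard reconfiguration argument of showing both directions of the equivalence by simulating reconfiguration sequences.

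For the safe components, the forward direction ($(G,k,I_s,I_t) \Rightarrow (G - V(S_2), k, I_s, I_t)$) is immediate since $G - V(S_2)$ is an induced subgraph and $I_s, I_t$ avoid $V(S_2)$; but that is not quite what we want, because deleting a safe component changes the ``world'' the tokens live in. The right way to phrase it is: first apply \cref{cor:deg_safe} and \cref{cor:diam_safe} to both $S_1$ and $S_2$ to obtain a reconfiguration sequence in $G$ in which at most one token ever sits in $N(S_1)$ and at most one token ever sits in $N(S_2)$. Since $S_1$ and $S_2$ are equivalent — same neighborhood $B$ in $L_2$, isomorphic via an isomorphism fixing $B$ — whenever the sequence absorbs a token into $S_2$ through $B$, we can instead route that single token into $S_1$ (using $S_1$'s absorption-projection property, which lets it swallow arbitrarily many tokens and project them back as needed), and project it back out through $B$ exactly when the original sequence projected from $S_2$. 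This yields a sequence that never places a token on $V(S_2)$, hence is valid in $G - V(S_2)$. Conversely, a sequence in $G - V(S_2)$ is valid in $G$ verbatim, since $G - V(S_2)$ is an induced subgraph. The one technical point to check is that, along the simulation, the positions of all tokens outside $S_1 \cup S_2 \cup B$ are preserved and the token counts match — which is exactly the invariant maintained in the proofs of \cref{lem:deg-safe-behavior} and \cref{lem:diam_safe_behavior}, so we are reusing machinery rather than reinventing it.

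For the bounded components the argument is a counting/pigeonhole argument. We have $k+1$ pairwise equivalent bounded components $B_1, \dots, B_{k+1}$, all with the same neighborhood $B$ in $L_2$ and pairwise isomorphic (fixing $B$). Forward direction: a reconfiguration sequence in $G - V(B_{k+1})$ extends verbatim to $G$. Backward direction: take a reconfiguration sequence $\mathcal{S}$ in $G$; we want to reroute everything out of $B_{k+1}$. The key observation is that at any moment at most $k$ of the $k+1$ components $B_1, \dots, B_{k+1}$ can contain a token (there are only $k$ tokens), so there is always at least one ``empty'' component among them. Whenever $\mathcal{S}$ is about to slide a token from $B$ into $B_{k+1}$ (the only way a token can enter $B_{k+1}$, since $N(B_{k+1}) \subseteq B$ and $I_s, I_t$ avoid these components), we instead slide it into the currently-empty component $B_j$; by equivalence, $G[V(B_j) \cup B] \cong G[V(B_{k+1}) \cup B]$ fixing $B$, so whatever internal slides $\mathcal{S}$ performs inside $B_{k+1}$ can be mirrored inside $B_j$, and when $\mathcal{S}$ projects the token back out to $B$ we do likewise. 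The subtlety is bookkeeping when more than one of $B_1,\dots,B_{k+1}$ is occupied at once and $\mathcal{S}$ wants to use $B_{k+1}$ too: one maintains a partial injection from ``components used by $\mathcal{S}$'' to ``actual components in $G - V(B_{k+1})$'', updating it as components empty out; since at most $k$ are occupied at any time and $G - V(B_{k+1})$ still has $k$ copies, the injection can always be maintained.

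The main obstacle, and the place where I expect to spend the most care, is the bounded-component redundancy — specifically making the rerouting map well-defined when the original sequence interleaves activity across several of the $B_i$'s simultaneously. One must argue that the map can be chosen consistently over time: a token that enters some $B_i$ in $\mathcal{S}$ stays confined to $V(B_i) \cup B$ until it leaves (since $N(B_i) \subseteq B$), so the ``lifespan'' of each visit to a copy is a contiguous interval, and at every moment the number of simultaneously-active lifespans is at most $k$; hence a greedy assignment of active lifespans to the $k$ available copies $B_1,\dots,B_k$ never fails. Once this scheduling lemma is in place, the isomorphisms fixing $B$ translate each mirrored slide into a legal slide, independence is preserved because $B$ separates distinct copies from each other, and the resulting sequence lies entirely in $G - V(B_{k+1})$. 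For the safe-component half, by contrast, there is nothing to schedule since only a single token ever touches $N(S_i)$, so that direction is essentially a direct application of \cref{cor:deg_safe}, \cref{cor:diam_safe}, and the absorption-projection property of $S_1$.
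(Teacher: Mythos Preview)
Your proposal is correct and follows essentially the same approach as the paper: reroute tokens from $S_2$ into $S_1$ via the absorption--projection property (\cref{cor:deg_safe}, \cref{cor:diam_safe}), and for the bounded case use pigeonhole on $k$ tokens versus $k+1$ equivalent copies. The paper's own proof is considerably terser and does not spell out the scheduling/injection bookkeeping you develop for the bounded components; your treatment of that point is more careful than the paper's, but the underlying idea is the same.
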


\begin{proof}
Removing vertices from the graph preserves no-instances. As for yes-instances, we will prove equivalence for safe and bounded components separately.

Assume a token reaches the neighborhood of $S_1$ and $S_2$ (they have the same neighborhood). Whether the token slides to either of them is irrelevant because both can hold all the tokens together and have the same behavior regarding entering from $L_2$ and leaving to $L_2$. Hence, from~\cref{cor:deg_safe} and~\cref{cor:diam_safe}, we can always choose to slide to $S_1$ and never to $S_2$ and therefore removing $S_2$ will preserve yes-instances.
    
Assume a token reaches the neighborhood of all $B_i$'s (they have the same neighborhood). The components not being empty implies that each one can hold at least one token if it can, and hence we can always choose to slide the tokens to one of the first $k$ components since it will be enough to hold all tokens. Therefore removing $B_{k+1}$ will preserve yes-instances.
\qed
\end{proof}

After exhaustively removing equivalent components
as described in~\cref{redundant_component_removal} we obtain
the following corollary. 

\begin{corollary}\label{number_of_components}
There are at most $k\beta(G)$ bounded components and $\sigma(G)$ safe components. 
\end{corollary}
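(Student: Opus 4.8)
The plan is to turn \cref{redundant_component_removal} into a reduction rule that we apply exhaustively, and then conclude by a pigeonhole count.

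First I would verify that the reduction behaves well with respect to our partition. Deleting a whole connected component $C$ of $G[L_3]$ removes no vertex of $L_1 = I_s \cup I_t$ and no vertex of $L_2 = N_G(L_1)$, so the partition of the resulting graph is exactly $(L_1, L_2, L_3 \setminus V(C))$. Consequently the $L_2$-neighborhood of every surviving component, its type (safe, bounded, or bad, cf.\ \cref{l3_component_types}), and the equivalence relation of \cref{equivalent_components} restricted to the surviving components are all unchanged; moreover deleting vertices never creates a cycle, so the girth stays at least five, and by \cref{redundant_component_removal} the instance obtained is equivalent to the original one. Since each application strictly decreases $|V(G)|$, the process terminates.

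Let $(G^\star, k, I_s, I_t)$ be the instance obtained once no further application of \cref{redundant_component_removal} is possible. Then no equivalence class of safe components of $G^\star$ contains two components, and no equivalence class of bounded components contains $k+1$ components, where — following the remark after \cref{lem:bound_bad} — bad components are now counted as bounded components. By the stability observation above, the equivalence classes of $G^\star$ are in natural bijection with those of $G$, so there are at most $\sigma(G)$ classes of safe components and at most $\beta(G)$ classes of bounded components. Hence $G^\star$ has at most $\sigma(G)$ safe components and at most $k\beta(G)$ bounded components, which is the claim.

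The only genuinely delicate point I expect is the stability claim: that removing one redundant component leaves the $L_2$-neighborhoods, the component classification, and the equivalence relation of the remaining components untouched. Without it the exhaustive-application argument could fail to terminate in the intended state, or the quantities $\sigma(G)$ and $\beta(G)$ could drift during the process. Once stability is in place, the remainder is a routine fixed-point-plus-pigeonhole argument, so the bulk of the work is really in the preceding lemmas rather than in this corollary.
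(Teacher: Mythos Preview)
Your proposal is correct and follows exactly the approach the paper intends: the paper states the corollary immediately after noting that \cref{redundant_component_removal} is applied exhaustively, without giving an explicit proof, and your argument simply fills in the pigeonhole count together with the stability check that deletion of an $L_3$-component leaves $L_1$, $L_2$, the remaining component types, and the equivalence relation unchanged. The stability point you flag as delicate is indeed glossed over in the paper, but your verification of it is straightforward and correct.
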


This leads to the final lemma. 

\begin{lemma}\label{equivalency_classes_bound}
We have $\beta(G) = 2^{k^{O(k^3)}}$, $\sigma(G) = 2^{O(k^6)}$,  
$|L_3| \leq k^{O(k^3)}2^{k^{O(k^3)}} + k^3 2^{O(k^6)} = 2^{k^{O(k^3)}}$, and  
$|V(G)| = |L_1| + |L_2| + |L_3| = 2^{k^{O(k^3)}}$.
\end{lemma}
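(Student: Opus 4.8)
The plan is to bound $\beta(G)$ and $\sigma(G)$ by a crude counting argument and then combine these bounds with \cref{number_of_components} and the per-component size bounds already established. By \cref{l2_bound} we may assume $|L_1 \cup L_2| = O(k^3)$, that every safe component has been replaced by a gadget on at most $3k + 8k^3 = O(k^3)$ vertices, and (after also applying \cref{lem:bound_bad} exhaustively, treating bad components as bounded) that every bounded component has at most $k^{O(k^3)}$ vertices.

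First I would bound $\beta(G)$. By \cref{equivalent_components}, an equivalence class of bounded components is completely determined by its neighbourhood $B \subseteq L_2$ together with the isomorphism type of $G[V(C) \cup B]$ under isomorphisms fixing $B$ pointwise. Since $|L_2| = O(k^3)$, there are at most $2^{|L_2|} = 2^{O(k^3)}$ choices for $B$. For a fixed $B$, a bounded component $C$ has $|V(C)| \le k^{O(k^3)}$, so $G[V(C) \cup B]$ is a graph on at most $k^{O(k^3)} + O(k^3) = k^{O(k^3)}$ vertices; the number of such isomorphism types is at most the number of labeled graphs on that many vertices, namely $2^{\binom{k^{O(k^3)}}{2}} = 2^{k^{O(k^3)}}$. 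Multiplying, $\beta(G) \le 2^{O(k^3)} \cdot 2^{k^{O(k^3)}} = 2^{k^{O(k^3)}}$. The identical argument applied to safe components, which now have only $O(k^3)$ vertices, gives $\sigma(G) \le 2^{O(k^3)} \cdot 2^{\binom{O(k^3)}{2}} = 2^{O(k^3)} \cdot 2^{O(k^6)} = 2^{O(k^6)}$.

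Next I would plug these into \cref{number_of_components}. There are at most $k\beta(G)$ bounded components, each of size at most $k^{O(k^3)}$, and at most $\sigma(G)$ safe components, each of size at most $3k + 8k^3 = O(k^3)$. Hence
$|L_3| \le k\beta(G)\cdot k^{O(k^3)} + \sigma(G)\cdot O(k^3) \le k^{O(k^3)}\, 2^{k^{O(k^3)}} + k^3\, 2^{O(k^6)} = 2^{k^{O(k^3)}}$,
where the first summand dominates. Finally, $|V(G)| = |L_1| + |L_2| + |L_3| \le O(k^3) + 2^{k^{O(k^3)}} = 2^{k^{O(k^3)}}$, as claimed.

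I do not expect a genuine obstacle here, since once the structural lemmas of the previous subsections are available the estimate is purely combinatorial. The only points requiring mild care are: (i) confirming that the per-component size bounds ($k^{O(k^3)}$ for bounded/bad components via \cref{bounded_components_size} and \cref{lem:bound_bad}, and $O(k^3)$ for replaced safe components via \cref{bounding_safe_components} and \cref{l2_bound}) remain valid after \emph{all} reduction rules have been applied exhaustively, so that the final instance simultaneously satisfies all of them; and (ii) justifying that the number of isomorphism types of $G[V(C) \cup B]$ fixing $B$ pointwise is bounded above by the number of labeled graphs on $|V(C)| + |B|$ vertices, which is immediate since distinct isomorphism types yield distinct labeled graphs.
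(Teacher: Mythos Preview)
Your proposal is correct and follows essentially the same approach as the paper: bound the number of equivalence classes by counting labeled graphs on $|V(C)\cup B|$ vertices (using the $2^{\binom{n}{2}}$ bound), then multiply by the per-component size bounds and invoke \cref{number_of_components}. Your version is in fact slightly more explicit than the paper's, as you separate the choice of $B\subseteq L_2$ from the isomorphism-type count and spell out the caveats (i) and (ii); the paper folds both into a single step.
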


\begin{proof}
Since $L_2$ and safe components have $O(k^3)$ size (from~\cref{l2_bound}) then safe components along with their neighbors in $L_2$ have size $O(k^3)$. Hence there are 
$2^{O(k^6)}$ equivalence classes of safe components.
    
Since bounded components have size $k^{O(k^3)}$  (from~\cref{bounded_components_size}) the bounded components along with their neighbors in $L_2$ have size $k^{O(k^3)}$ and hence there are $2^{k^{O(k^3)}}$ equivalence classes of bounded components.
    
Finally, using the fact that there are $2^{n \choose 2}$ graphs with $n$ vertices combined with~\cref{number_of_components}, we get the desired bound on $L_3$, which implies the desired bound on the size of $V(G)$.
\qed
\end{proof}

\section{The algorithm}

\subsection{Outline}
Now that we have bounded the size of $G$ by $f(k) = 2^{k^{O(k^3)}}$ we describe below the complete algorithm for solving an instance $(G, k, I_s, I_t)$ of the {\sc Token Sliding} problem, where $G$ has girth five or more. 
\begin{enumerate}
    \item Bound the graph size;
    \begin{enumerate}
        \item Remove twin vertices as described in~\cref{lem:removal_of_twin_nodes}; 
        \item Repeat the following while $L_1$ has a vertex of degree greater than $2k^2$ or there exists an unbounded safe component in $L_3$:
        \begin{itemize}
            \item Find safe components as described in~\cref{l3_component_types};
            \item Replace safe components as described in~\cref{bounding_safe_components};
            \item Find a vertex $u \in L_1$ with degree greater than $2k^2$;
            \item Slide the token to a vertex of degree at most $2k^2$ (\cref{bounding_l1_degrees});
        \end{itemize}
        \item Test all pairs of $L_3$ components for equivalence (\cref{equivalent_components});
        \item Partition the components into equivalence classes;
        \begin{itemize}
            \item For classes containing a safe component, keep one component and remove the others from the graph (\cref{redundant_component_removal});
            \item For each other class, keep $k$ components and remove the others from the graph. If there are already less than $k$ components then do nothing (\cref{redundant_component_removal});
        \end{itemize}
    \end{enumerate}
    
    \item Build the graph $\mathcal{R}(G, k)$;
    \begin{itemize}
        \item $\mathcal{R}(G, k)$ will have a node for each independent set of $G$ of size $k$;
        \item Two nodes $I, J \in \mathcal{R}(G, k)$ will be connected by an edge if the corresponding independent sets are adjacent with respect to the token slide definition, namely $I \Delta J = \{u, v\} \in E(G)$;
    \end{itemize}
    
    \item Run a breadth-first search (BFS) traversal on $\mathcal{R}(G, k)$ with source $I_s$ and destination $I_t$. Return {\it true} if the two are in the same component and {\it false} otherwise;
\end{enumerate}

\subsection{Analysis}

\paragraph*{Complexity of step (1).}
Step (a), removing twin vertices, can be naively implemented to run in $O(n^3)$-time.
Going to step (b), finding degree-safe components will take $O(n)$-time by simply checking the degrees of all vertices in a component. As for diameter-safe components, we can find them in $O(n^2)$-time by finding for each vertex $u$ in a component $C$ the vertex $v$ furthest away from $u$ in $C$ using a~BFS. Replacing a component can be done in $O(n)$-time. Finding $u \in L_1$ such that the degree of $u$ is greater than $2k^2$ and replacing it via slides can be done in $O(k)$-time. This procedure will be repeated at most $2k$ times and hence step~(b) requires $O(k^2 + kn^2)$-time.
Going to step (c), we can test isomorphism of components using any exponential-time algorithm. Since the size of the individual components is now bounded by $k^{O(k^3)}$ and the algorithm will run on all pairs of components, step (c) will require $2^{k^{O(k^3)}}$-time in the worst case.
Finally, step~(d) consists only of removing components and can be done in $O(n)$. Therefore step~(1) will take $O(kn^3 + 2^{k^{O(k^3)}})$-time.

\paragraph*{Complexity of step (2).}
Building the graph $\mathcal{R}(G, k)$ will take $O(|V(\mathcal{R}(G, k))| + k^2 |V(\mathcal{R}(G, k))|^2) = O(k^2{f(k) \choose k}^2)$-time since we can check naively for each pair of nodes if they are connected via one slide.

\paragraph*{Complexity of step (3).}
The breadth-first search traversal will take $O(|V(\mathcal{R}(G, k))| + |E(\mathcal{R}(G, k))|) = O({f(k) \choose k}^2)$-time.

\paragraph*{Putting it all together.}
Therefore, the total running time of the algorithm is $$O(kn^3) + 2^{k^{O(k^3)}} + O(k^2{f(k) \choose k}^2)$$ and hence we get the desired result.

\begin{theorem}
{\sc Token Sliding} is fixed-parameter tractable when parameterized by $k$ on graphs of girth five or more.
\end{theorem}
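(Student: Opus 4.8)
The plan is to combine all the reduction steps of the previous section into a single kernelization procedure that produces an equivalent instance whose size is bounded by a function of $k$, and then to decide that bounded instance by brute force on its reconfiguration graph.

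First I would apply \cref{lem:removal_of_twin_nodes} exhaustively to delete twin vertices outside $I_s \cup I_t$, which takes polynomial time. Then, maintaining the partition $L_1 = I_s \cup I_t$, $L_2 = N_G(L_1)$, $L_3 = V(G)\setminus(L_1\cup L_2)$, I would repeat the following: detect the safe (degree- or diameter-safe) components of $G[L_3]$ via \cref{l3_component_types}, replace each of them by the bounded-size gadget of \cref{bounding_safe_components} (which by that lemma gives an equivalent instance and, as noted in its proof, creates no cycle of length less than five), and, while some $u \in L_1$ has degree exceeding $2k^2$, slide its token to a vertex of degree at most $2k^2$ using \cref{bounding_l1_degrees}, relabelling $L_1,L_2,L_3$ accordingly. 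By \cref{l2_bound} this terminates with $|L_1\cup L_2| = O(k^3)$ and every safe component replaced by one of size $O(k^3)$; by \cref{lem:bound_bad} the bad components are then also bounded, so every component of $L_3$ has size $k^{O(k^3)}$. Finally I would apply \cref{redundant_component_removal}: group the components of $L_3$ into the equivalence classes of \cref{equivalent_components} (testable in time $2^{k^{O(k^3)}}$ per pair, since components have bounded size), keep one representative per safe class and $k$ representatives per bounded class, and discard the rest. \cref{equivalency_classes_bound} then gives $|V(G)| = f(k) := 2^{k^{O(k^3)}}$, and the resulting instance $(G',k,I_s',I_t')$ is equivalent to the original and still has girth at least five.

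Having bounded the instance, I would build the reconfiguration graph $\mathcal{R}(G',k)$: it has at most $\binom{f(k)}{k}$ nodes, one for each size-$k$ independent set, and an edge between two nodes whenever their symmetric difference is an edge of $G'$; this costs $O(k^2 \binom{f(k)}{k}^2)$ time. A breadth-first search from $I_s'$ then decides in $O(\binom{f(k)}{k}^2)$ time whether $I_t'$ lies in the same component, which by definition of $\mathcal{R}(G',k)$ answers the {\sc Token Sliding} instance. Summing the costs, the algorithm runs in time $O(kn^3) + 2^{k^{O(k^3)}}$, which is of the form $g(k)\cdot\mathrm{poly}(n)$, so {\sc Token Sliding} is fixed-parameter tractable on graphs of girth at least five.

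The genuinely hard part is not this final assembly but the two ingredients it relies on, both already established. The first is the \emph{absorption-projection property} of safe components (\cref{cor:deg_safe}, \cref{cor:diam_safe}): that a long induced subdivided star or a long induced path inside a girth-five component can shuttle up to $k$ tokens so that only a single token ever occupies its neighborhood. The delicate point there is routing a newly arriving token past the up to $k-1$ tokens already parked inside without ever creating a short cycle, which is exactly what forces the gap-interval and independent-set counting together with the $k^3$ diameter (resp.\ $k^2$ degree) thresholds. The second is bounding the \emph{number} of components of $L_3$ through the isomorphism-based equivalence of \cref{equivalent_components}, which only becomes feasible after $L_2$ has been shrunk to size $O(k^3)$. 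Once the size bound $f(k)$ is in hand, the remaining brute-force search over $\mathcal{R}(G',k)$ is routine.
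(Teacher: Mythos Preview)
Your proposal is correct and follows essentially the same route as the paper: exhaustively apply the reductions of Section~3 (twin removal, safe-component replacement via \cref{bounding_safe_components}, the degree-lowering slides of \cref{bounding_l1_degrees}, and the equivalence-class pruning of \cref{redundant_component_removal}) to obtain an equivalent girth-$\geq 5$ instance of size $f(k)=2^{k^{O(k^3)}}$, then brute-force $\mathcal{R}(G',k)$ by BFS. Your running-time accounting matches the paper's (your final bound absorbs the $O(k^2\binom{f(k)}{k}^2)$ term into $2^{k^{O(k^3)}}$, which is legitimate), and your closing paragraph correctly isolates \cref{cor:deg_safe}, \cref{cor:diam_safe} and the equivalence-class bound as the substantive work.
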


\bibliographystyle{plain}
\bibliography{refs}

\begin{thebibliography}{10}

\bibitem{DBLP:conf/iwpec/AgrawalAD21}
Akanksha Agrawal, Ravi~Kiran Allumalla, and Varun~Teja Dhanekula.
\newblock Refuting {FPT} algorithms for some parameterized problems under
  {Gap-ETH}.
\newblock In Petr~A. Golovach and Meirav Zehavi, editors, {\em 16th
  International Symposium on Parameterized and Exact Computation, {IPEC} 2021},
  volume 214 of {\em LIPIcs}, pages 2:1--2:12. Schloss Dagstuhl -
  Leibniz-Zentrum f{\"{u}}r Informatik, 2021.

\bibitem{DBLP:journals/algorithmica/BartierBDLM21}
Valentin Bartier, Nicolas Bousquet, Cl{\'{e}}ment Dallard, Kyle Lomer, and
  Amer~E. Mouawad.
\newblock On girth and the parameterized complexity of token sliding and token
  jumping.
\newblock {\em Algorithmica}, 83(9):2914--2951, 2021.

\bibitem{DBLP:journals/corr/abs-2204-05549}
Valentin Bartier, Nicolas Bousquet, and Amer~E. Mouawad.
\newblock Galactic token sliding.
\newblock {\em CoRR}, abs/2204.05549, 2022.

\bibitem{DBLP:journals/mst/BelmonteKLMOS21}
R{\'{e}}my Belmonte, Eun~Jung Kim, Michael Lampis, Valia Mitsou, Yota Otachi,
  and Florian Sikora.
\newblock Token sliding on split graphs.
\newblock {\em Theory Comput. Syst.}, 65(4):662--686, 2021.

\bibitem{DBLP:conf/iwpec/BodlaenderGS21}
Hans~L. Bodlaender, Carla Groenland, and C{\'{e}}line M.~F. Swennenhuis.
\newblock Parameterized complexities of dominating and independent set
  reconfiguration.
\newblock In Petr~A. Golovach and Meirav Zehavi, editors, {\em 16th
  International Symposium on Parameterized and Exact Computation, {IPEC} 2021,
  September 8-10, 2021, Lisbon, Portugal}, volume 214 of {\em LIPIcs}, pages
  9:1--9:16. Schloss Dagstuhl - Leibniz-Zentrum f{\"{u}}r Informatik, 2021.

\bibitem{DBLP:conf/wg/BonamyB17}
Marthe Bonamy and Nicolas Bousquet.
\newblock Token sliding on chordal graphs.
\newblock In Hans~L. Bodlaender and Gerhard~J. Woeginger, editors, {\em
  Graph-Theoretic Concepts in Computer Science - 43rd International Workshop,
  {WG} 2017, Eindhoven, The Netherlands, June 21-23, 2017, Revised Selected
  Papers}, volume 10520 of {\em Lecture Notes in Computer Science}, pages
  127--139. Springer, 2017.

\bibitem{DBLP:conf/iwpec/BonnetBCTW18}
{\'{E}}douard Bonnet, Nicolas Bousquet, Pierre Charbit, St{\'{e}}phan
  Thomass{\'{e}}, and R{\'{e}}mi Watrigant.
\newblock Parameterized complexity of independent set in {H}-free graphs.
\newblock In Christophe Paul and Michal Pilipczuk, editors, {\em 13th
  International Symposium on Parameterized and Exact Computation, {IPEC} 2018,
  August 20-24, 2018, Helsinki, Finland}, volume 115 of {\em LIPIcs}, pages
  17:1--17:13. Schloss Dagstuhl - Leibniz-Zentrum f{\"{u}}r Informatik, 2018.

\bibitem{DBLP:conf/swat/BonsmaKW14}
Paul~S. Bonsma, Marcin Kaminski, and Marcin Wrochna.
\newblock Reconfiguring independent sets in claw-free graphs.
\newblock In R.~Ravi and Inge~Li G{\o}rtz, editors, {\em Algorithm Theory -
  {SWAT} 2014 - 14th Scandinavian Symposium and Workshops, Copenhagen, Denmark,
  July 2-4, 2014. Proceedings}, volume 8503 of {\em Lecture Notes in Computer
  Science}, pages 86--97. Springer, 2014.

\bibitem{DBLP:journals/tcs/BrewsterMMN16}
Richard~C. Brewster, Sean McGuinness, Benjamin Moore, and Jonathan~A. Noel.
\newblock A dichotomy theorem for circular colouring reconfiguration.
\newblock {\em Theor. Comput. Sci.}, 639:1--13, 2016.

\bibitem{DBLP:journals/dm/CerecedaHJ08}
Luis Cereceda, Jan van~den Heuvel, and Matthew Johnson.
\newblock Connectedness of the graph of vertex-colourings.
\newblock {\em Discret. Math.}, 308(5-6):913--919, 2008.

\bibitem{DBLP:conf/isaac/DemaineDFHIOOUY14}
Erik~D. Demaine, Martin~L. Demaine, Eli Fox{-}Epstein, Duc~A. Hoang, Takehiro
  Ito, Hirotaka Ono, Yota Otachi, Ryuhei Uehara, and Takeshi Yamada.
\newblock Polynomial-time algorithm for sliding tokens on trees.
\newblock In Hee{-}Kap Ahn and Chan{-}Su Shin, editors, {\em Algorithms and
  Computation - 25th International Symposium, {ISAAC} 2014, Jeonju, Korea,
  December 15-17, 2014, Proceedings}, volume 8889 of {\em Lecture Notes in
  Computer Science}, pages 389--400. Springer, 2014.

\bibitem{DBLP:conf/isaac/Fox-EpsteinHOU15}
Eli Fox{-}Epstein, Duc~A. Hoang, Yota Otachi, and Ryuhei Uehara.
\newblock Sliding token on bipartite permutation graphs.
\newblock In Khaled~M. Elbassioni and Kazuhisa Makino, editors, {\em Algorithms
  and Computation - 26th International Symposium, {ISAAC} 2015, Nagoya, Japan,
  December 9-11, 2015, Proceedings}, volume 9472 of {\em Lecture Notes in
  Computer Science}, pages 237--247. Springer, 2015.

\bibitem{DBLP:journals/toct/GharibianS18}
Sevag Gharibian and Jamie Sikora.
\newblock Ground {S}tate {C}onnectivity of {L}ocal {H}amiltonians.
\newblock {\em {ACM} Trans. Comput. Theory}, 10(2):8:1--8:28, 2018.

\bibitem{DBLP:journals/siamcomp/GopalanKMP09}
Parikshit Gopalan, Phokion~G. Kolaitis, Elitza~N. Maneva, and Christos~H.
  Papadimitriou.
\newblock {T}he {C}onnectivity of {B}oolean {S}atisfiability: {C}omputational
  and {S}tructural {D}ichotomies.
\newblock {\em {SIAM} J. Comput.}, 38(6):2330--2355, 2009.

\bibitem{DBLP:journals/tcs/HearnD05}
Robert~A. Hearn and Erik~D. Demaine.
\newblock {PSPACE}-completeness of sliding-block puzzles and other problems
  through the nondeterministic constraint logic model of computation.
\newblock {\em Theor. Comput. Sci.}, 343(1-2):72--96, 2005.

\bibitem{DBLP:journals/tcs/ItoDHPSUU11}
Takehiro Ito, Erik~D. Demaine, Nicholas J.~A. Harvey, Christos~H.
  Papadimitriou, Martha Sideri, Ryuhei Uehara, and Yushi Uno.
\newblock On the complexity of reconfiguration problems.
\newblock {\em Theor. Comput. Sci.}, 412(12-14):1054--1065, 2011.

\bibitem{DBLP:journals/dam/ItoKD12}
Takehiro Ito, Marcin Kaminski, and Erik~D. Demaine.
\newblock Reconfiguration of list edge-colorings in a graph.
\newblock {\em Discret. Appl. Math.}, 160(15):2199--2207, 2012.

\bibitem{DBLP:conf/tamc/ItoKOSUY14}
Takehiro Ito, Marcin Kaminski, Hirotaka Ono, Akira Suzuki, Ryuhei Uehara, and
  Katsuhisa Yamanaka.
\newblock On the parameterized complexity for token jumping on graphs.
\newblock In T.~V. Gopal, Manindra Agrawal, Angsheng Li, and S.~Barry Cooper,
  editors, {\em Theory and Applications of Models of Computation - 11th Annual
  Conference, {TAMC} 2014, Chennai, India, April 11-13, 2014. Proceedings},
  volume 8402 of {\em Lecture Notes in Computer Science}, pages 341--351.
  Springer, 2014.

\bibitem{DBLP:journals/JS79}
Wm.~Woolsey Johnson and William~E. Story.
\newblock Notes on the ``15'' puzzle.
\newblock {\em American Journal of Mathematics}, 2(4):397--404, 1879.

\bibitem{DBLP:journals/tcs/KaminskiMM12}
Marcin Kaminski, Paul Medvedev, and Martin Milanic.
\newblock Complexity of independent set reconfigurability problems.
\newblock {\em Theor. Comput. Sci.}, 439:9--15, 2012.

\bibitem{DBLP:journals/icga/KendallPS08}
Graham Kendall, Andrew~J. Parkes, and Kristian Spoerer.
\newblock A survey of {NP}-complete puzzles.
\newblock {\em J. Int. Comput. Games Assoc.}, 31(1):13--34, 2008.

\bibitem{DBLP:journals/talg/LokshtanovM19}
Daniel Lokshtanov and Amer~E. Mouawad.
\newblock The complexity of independent set reconfiguration on bipartite
  graphs.
\newblock {\em {ACM} Trans. Algorithms}, 15(1):7:1--7:19, 2019.

\bibitem{DBLP:journals/jcss/LokshtanovMPRS18}
Daniel Lokshtanov, Amer~E. Mouawad, Fahad Panolan, M.~S. Ramanujan, and Saket
  Saurabh.
\newblock Reconfiguration on sparse graphs.
\newblock {\em J. Comput. Syst. Sci.}, 95:122--131, 2018.

\bibitem{DBLP:journals/comgeo/LubiwP15}
Anna Lubiw and Vinayak Pathak.
\newblock Flip distance between two triangulations of a point set is
  {NP}-complete.
\newblock {\em Comput. Geom.}, 49:17--23, 2015.

\bibitem{DBLP:journals/siamdm/MouawadNPR17}
Amer~E. Mouawad, Naomi Nishimura, Vinayak Pathak, and Venkatesh Raman.
\newblock {S}hortest {R}econfiguration {P}aths in the {S}olution {S}pace of
  {B}oolean {F}ormulas.
\newblock {\em {SIAM} J. Discret. Math.}, 31(3):2185--2200, 2017.

\bibitem{DBLP:journals/algorithms/Nishimura18}
Naomi Nishimura.
\newblock Introduction to reconfiguration.
\newblock {\em Algorithms}, 11(4):52, 2018.

\bibitem{DBLP:books/cu/p/Heuvel13}
Jan van~den Heuvel.
\newblock The complexity of change.
\newblock In Simon~R. Blackburn, Stefanie Gerke, and Mark Wildon, editors, {\em
  Surveys in Combinatorics 2013}, volume 409 of {\em London Mathematical
  Society Lecture Note Series}, pages 127--160. Cambridge University Press,
  2013.

\bibitem{DBLP:journals/jcss/Wrochna18}
Marcin Wrochna.
\newblock Reconfiguration in bounded bandwidth and tree-depth.
\newblock {\em J. Comput. Syst. Sci.}, 93:1--10, 2018.

\bibitem{DBLP:journals/siamdm/Wrochna20}
Marcin Wrochna.
\newblock Homomorphism reconfiguration via homotopy.
\newblock {\em {SIAM} J. Discret. Math.}, 34(1):328--350, 2020.

\bibitem{DBLP:journals/toc/Zuckerman07}
David Zuckerman.
\newblock Linear degree extractors and the inapproximability of max clique and
  chromatic number.
\newblock {\em Theory of Computing}, 3(1):103--128, 2007.

\end{thebibliography}

\end{document}